\renewcommand{\thispagestyle}[1]{}
\date{}
\def\texpsfig#1#2#3{\vbox{\kern #3\hbox{\includegraphics{#1}\kern #2}}\typeout{(#1)}}
\theoremstyle{plain}
\newtheorem{theorem}{Theorem}[section]
\newtheorem{cor}{Corollary}[section]
\newtheorem{dfn}[theorem]{Definition}
\newtheorem*{rem}{Remark}
\theoremstyle{remark}
\theoremstyle{plain}
\newtheorem{lem}[theorem]{Lemma}
\noindent\textit{Proof of the claim#1.}\quad}
\theoremstyle{definition}
\newcommand{\e}{{\rm e}}        
\def\R{\mathbb{ R}}             
\def\E{\mathbb{ E}}             
\def\Q{\mathbb{ Q}}  
\def\F{\mathcal{F}}             
\def\var{\mathbb{V}\text{ar}}   
\renewcommand{\d}{{\rm d}}      
\def\dW{{\rm d}W}               
\def\dt{{\rm d}t}
\DeclareMathOperator*{\arginf}{arg\,inf}
\def\ds{{\rm d}s}
\def\du{{\rm d}u}
\newcommand{\qv}[1]{\langle #1 \rangle}
\newcommand{\bs}[1]{\boldsymbol{#1}}
\def\1{{\mathbbm{1}}}            
\theoremstyle{plain}
\numberwithin{equation}{section}	     
\title{\raggedright Lifted Heston Model: Efficient Monte Carlo Simulation with Large Time Steps}
\begin{document}
\author[1]{NICOLA F. ZAUGG \corref{cor1}}
\ead{N.F.Zaugg@uu.nl}
\author[1,2]{\raggedright LECH A. GRZELAK}
\ead{L.A.Grzelak@uu.nl}
\cortext[cor1]{Corresponding author.}
\address[1]{Mathematical Institute, Utrecht University, Utrecht, the Netherlands}
\address[2]{Financial Engineering, Rabobank, Utrecht, the Netherlands}
\begin{abstract}
The lifted Heston model~\cite{abi2019lifting} is a stochastic volatility model emerging as a Markovian lift of the rough Heston model~\cite{el2019characteristic} and the class of rough volatility processes. The model encodes the path dependency of volatility on a set of $N$ square-root state processes driven by a common stochastic factor. While the system is Markovian, simulation schemes such as the Euler scheme exist, but require a small-step, multidimensional simulation of the state processes and are therefore numerically challenging. We propose a novel simulation scheme of the class of implicit integrated variance schemes~\cite{jaber2024simulation,jaber2025simulating}. The method exploits the near-linear nature between the stochastic driver and the conditional integrated variance process, which allows for a consistent and efficient sampling of the integrated variance process using an inverse Gaussian distribution. Since we establish the linear relation using a linear projection in the $L^2$ space, the method is optimal in an $L^2$ sense and offers a significant efficiency gain over similar methods. We demonstrate that our scheme achieves near-exact accuracy even for coarse discretizations and allows for efficient pricing of volatility options with large time steps.
\noindent 
\end{abstract}
\begin{keyword}
Lifted Heston Model, Monte-Carlo, Integrated Variance, Inverse Gaussian, Volatility Derivatives, Rough Stochastic Volatility
\end{keyword}
\maketitle
{\let\thefootnote\relax\footnotetext{The views expressed in this paper are the author's personal views and do not necessarily reflect the views or policies of their current or past employers. The authors have no competing interests.}}
{\let\thefootnote\relax\footnotetext{An accompanying Python code to the article containing examples is available on \href{https://github.com/NFZaugg/C-LP-Lifted-Heston}{Github}.}}
\section{Introduction}

Stochastic volatility models are a core class of models in mathematical finance to price exotic derivatives~\cite{Oosterlee_Grzelak_2020, bergomi2015stochastic}. Their flexibility enables robust calibration to most market regimes, and the models are able to price exotic derivatives consistently. The rise of rough volatility models, i.e., models whose volatility process resembles a fractional Brownian motion, promised to further advance the capabilities of the models by incorporating the inherent mean-reverting nature of volatility~\cite{bayer2016pricing, bayer2023rough}. These models can capture certain features of volatility which regular stochastic volatility models cannot capture, for instance, the sharp term structure of the at-the-money implied volatility~\cite{el2019characteristic}. Such rough volatility models emerged in various shapes, all based on the idea of a fractional driver for volatility.  A prominent rough model is the so-called \emph{``rough Heston model"}~\cite{el2019characteristic}, based on the classical Heston model~\cite{heston1993closed} with a CIR-type mean-reversion process for volatility. The drift and diffusion terms are then adjusted using a fractional kernel to add a rough structure to the paths of the volatility. The volatility process $\tilde{V}_t$ is defined as the equation
\begin{equation}
\label{eq:rough_heston}
    \tilde{V}_t = \tilde{V}_0 + \int_{t_0}^t \frac{(t-s)^{H-0.5}}{\Gamma(H+0.5)} \lambda(\theta-\tilde{V}_s) \,\ds + \nu\int_{t_0}^t \frac{(t-s)^{H-0.5}}{\Gamma(H+0.5)}  \sqrt{\tilde{V}_s}\,\dW_s,
\end{equation}
where $\Gamma$ is the gamma function, $\lambda,\theta, \nu, V_0$ are the usual Heston parameters, and $H \in (0,\frac{1}{2}]$ the so-called \emph{``Hurst"} coefficient, determining the roughness of the volatility paths.
While theoretically compelling, rough models face significant computational challenges due to the non-Markovian structure of rough volatility models and the lack of established numerical frameworks to address those challenges. The standard method is therefore to approximate the rough models with a Markovian system of differential equations. The fractional kernel $\frac{(t-s)^{H-0.5}}{\Gamma(H+0.5)}$, which causes the non-Markovian property, can be expressed as an infinite sum of exponentials and then truncated to finite terms~\cite{bayer2023markovian}. This approximation naturally leads to a Markovian multidimensional factor model, where each factor is an auxiliary process (or state processes) representing one summation term of the truncated exponential sum. For the rough Heston model, the multidimensional model was then established as the \emph{lifted Heston model}~\cite{abi2019lifting}, representing a \emph{Markovian lift} of the non-Markovian rough Heston model. 
\subsection{Lifted Heston Model}
The lifted Heston model as described in~\cite{abi2019lifting} is a generalization of the classical Heston model, and emerges as a natural Markovian approximation of the rough Heston model. It describes the instantaneous volatility of the asset process as a sum of $N$ mean-reverting processes, which share a common diffusion term.  Let $S_t,V_t$ be stochastic processes on a probability space $(\Omega, \mathbb{F} := (\mathcal{F}_t)_{t \geq t_0}, \Q)$, where $S_t$ denotes the price of a financial asset at time $t \geq t_0$, and $V_t$ is its variance process. Furthermore, let $r$ be a fixed interest rate. The model with $N$ states is defined as the following set of stochastic differential equations:
\begin{align}
    \label{eq:lifted_heston}
    \frac{\d S_t}{S_t} &= r\d t + \sqrt{V_t} \d W^1_t,  \\
    \label{eq:lifted_heston_middle}
    V_t &= g_0(t) + \sum_{n=1}^N \omega_n U^n_t =: g_0(t) + \boldsymbol{\omega} \cdot \bs U_t,\\
    \label{eq:lifted_heston_end}
    \d U^n_t &= (-x_nU_t^n - \lambda V_t)\d t + \nu \sqrt{V_t}\d W_t^2, 
\end{align}
where $W_t^1,W_t^2$\footnote{Since most calculations in the paper concern $W_t^2$, we will mostly only write $W_t$.} are correlated Brownian motions with $\d W_t^1 \d W_t^2 = \rho \d t$ with $\rho \in [-1,1]$, and $\boldsymbol{\omega }= (\omega_1, \omega_2 \dots, \omega_N) \in \R^N_{\geq 0}, \boldsymbol{x} = (x_1, x_2 \dots, x_N) \in \R^N_{\geq 0}$ are fixed vectors of size $N$. The further parameters $\nu ,\lambda \in \R_{> 0}$ are the vol-of-vol and the mean-reversion speed. The state processes $U_{t}^n$ are displayed in matrix notation as the column vector $\bs{U}_t = (U_{t}^1,U_{t}^2,\dots,U_{t}^N)^T$ and have initial value $\bs{U}_{t_0} = \bs{0}_N$ at $t_0$. These processes act as a state retainer for the past path of the volatility (hence we refer to them as ``state" processes). The function $g_0(\cdot)$ is a deterministic map that describes the shape of $\E[V_t], t \geq t_0$, and can be fit to the market. The method presented in this article does not prescribe a particular choice of $g_0(\cdot)$. We will use the function~\cite{abi2019lifting}
\begin{equation}
\label{eq:fwd_curve_simple}
    g_0(t) = V_0 + \lambda \theta \sum_{n=1}^N \frac{\omega_n}{x_n} \left[1 - \e^{-x_n (t-t_0)}\right] ,
\end{equation}
where the additional parameter $\theta \geq 0$ is used to steer the long-term variance mean, and $V_0 \geq 0$ is the deterministic variance at $t = t_0$. With this initial variance curve, the expected value of $V_T$ at a time $T \geq t_0$ is given by
\begin{equation}
\label{eq:expecation}
    \E[V_T] = V_0 + \sum_{n=1}^N \omega_n \int_{t_0}^T \lambda(\theta - \E[V_t]) \e^{-x_n (T-t)}\dt.
\end{equation}
We refer to \ref{app:proof_mean} for the exact calculation of this expression. The lifted Heston model is a generalization of the regular Heston model, and collapses with the choice $N=1, \omega_1 = 1,x_1 = 0$ and $g_0(t) = V_0 + \lambda\theta (t-t_0)$. In this case, we obtain the system
\begin{align}
    \label{eq:heston}
    \frac{\d S_t}{S_t} &= r\d t + \sqrt{V_t} \d W^1_t,  \\
    \label{eq:heston_end}
    V_t &= g_0(t) +  U^1_t,\\
    \d U^1_t &= - \lambda V_t \d t + \nu \sqrt{V_t}\d W_t^2, 
\end{align}
from which we can derive the dynamics of $V_t$ as 
\begin{equation}
\label{eq:heston_simple}
    \d V_t = \lambda (\theta - V_t)\dt + \nu \sqrt{V_t}\d W_t^2, \quad V_{t_0} = V_0.
\end{equation}
In the formulation of \cref{eq:lifted_heston,eq:lifted_heston_middle,eq:lifted_heston_end} the lifted Heston model with $g_0$ as (\ref{eq:fwd_curve_simple}) has
$(5 + 2N)$ parameters (for a fixed $N$), since for each $n \leq N$, values for $x_n$ and $\omega_n$ need to be set. To reduce the number of parameters for a large number of state processes, we introduce a parametrization~\cite{abi2019lifting} of the parameters $\bs \omega$ and $\bs x$ to reduce the number of parameters to a maximum of $6$ for any number of state processes. The core idea is that the parameters for $\bs{\omega}$ and $\bs x$ can be expressed as an additional parameter $H$, which resembles the Hurst index determining the ``roughness" of the paths of the rough Heston variance process of \Cref{eq:rough_heston}. Let $H \in (0, \frac{1}{2})$ be the Hurst index. Then, we define 
\begin{equation}
    r_N = 1+ 10N^{-0.9}, \quad N \geq 1.
\end{equation}
From this value of $r_N$ we define the mean-reversion weights $\bs{\omega}$ and $\bs x$ for a fixed $N$ as
\begin{equation}
    \omega_n = \frac{(r_N^{\frac{1}{2}-H}-1)r_N^{(H-\frac{1}{2})(1+\frac{1}{2}N)}}{\Gamma(H+\frac{1}{2})\Gamma(\frac{3}{2}-H)} r_N^{(\frac{1}{2}-H)n}, \quad x_n = \frac{\frac{1}{2}-H}{\frac{3}{2}-H}\frac{(r_N^{\frac{3}{2}-H}-1)}{(r_N^{\frac{1}{2}-H}-1)}r_N^{n-1-\frac{N}{2}}\quad n \leq N.
\end{equation}
As shown in~\cite[Theorem A.2]{abi2019lifting}, this parametrization ensures that the lifted Heston model converges to a rough Heston model with Hurst index $H$, as $N \to \infty$, meaning that $V_t \to \tilde{V}_t$.
\subsection{Article Aim}
 The numerical challenges inherent in rough volatility models motivate the development of new and innovative algorithms, both for rough volatility itself and for Markovian counterparts such as the lifted Heston model~\cite{ma2022fast,bayer2023markovian,jaber2025simulating,richard2023discrete}. In this paper, we introduce a novel efficient sampling scheme for the lifted Heston model based on the integrated variance implicit (IVI) schemes~\cite{jaber2024simulation,jaber2025simulating}. The novel method obtains the optimal implicit equation between integrated variance and diffusion process of the volatility by utilizing a linear projection method. It is capable of simulating much larger time steps than a classical Euler method, provides robust (positive) samples for the variance process at all times, and outperforms other IVI schemes significantly in terms of convergence. A Python implementation of the algorithm can be found on \href{https://github.com/NFZaugg/C-LP-Lifted-Heston}{Github}. In this section, we introduced the stochastic setup of the lifted Heston model. \Cref{sec:scheme} covers a brief literature review for (lifted) Heston model schemes, and discusses the theoretical aspects of the novel scheme. It contains the main theorems to derive the required quantities for the method. In \Cref{sec:num}, we then present numerical examples to illustrate the effectiveness of the new method. 

\section{A Monte Carlo Scheme of Lifted Heston Model}
\label{sec:scheme}
The classical method to simulate a system of SDEs, such as the lifted Heston model, is the Euler-Maruyama (EM) scheme. The EM scheme iterates forward in time to obtain the paths and converges strongly in $L^2$ with an increasing number of steps, as $\Delta t$ goes to 0. Since the convergence depends on the size of the time steps, the scheme is not capable of efficiently simulating \emph{larger time steps}, where fewer simulation grid points are required to simulate the paths. As in the standard Heston model, larger time steps in the EM scheme can cause the variance process $V_t$ in (\ref{eq:lifted_heston_middle}) to become negative due to a large negative observation of $\d W^2_t$, leading to defective paths. This issue can be solved by introducing a volatility-path reflection or absorption at $0$, at the cost of introducing a bias in the simulation~\cite{deelstra1998convergence,alfonsi2005discretization,lord2010comparison}.

The connection of the lifted Heston model to square-root processes offers the possibility of applying the well-established research on one-dimensional square-root processes, such as the regular Heston model, to the lifted Heston model. The Heston model, whose volatility process is a CIR process, has various numerical approaches and insights available in the literature, which are relevant to the lifted Heston model. The exact simulation scheme of Broadie and Kaya~\cite{broadie2006exact} introduced a new approach, where the sampling is focused on the \emph{integrated variance} directly, rather than increments of the Gaussian noise process, as it would be usual for Euler-type schemes. Although this sampling proved to be computationally expensive, as it requires path-wise sampling from an expensive distribution, their research was improved using a cheaper stochastic collocation sampler~\cite{grzelak2019stochastic}, and inspired simpler, non-exact, yet highly efficient schemes based on similar principles~\cite{andersen2008simple, jaber2024simulation}. The derivations of Broadie and Kaya showed that large time step sampling is possible for stochastic volatility models when the integrated variance is sampled directly.
Abi Jaber and Attal~\cite{jaber2025simulating} then suggested an integrated variance framework in a general framework for rough volatility models, utilizing an inverse Gaussian sampling scheme for faster convergence compared to a Gaussian sampling scheme. The core idea is to establish an implicit relation between the integrated variance process $X_{s,t} := \int_s^t V_{u} \du$ and the diffusion term $Z_{s,t} := \int_s^t \sqrt{V_u} \d W^2_u$ for two fixed times $s<t$, from which one can derive an explicit distribution of $X_{s,t}$. This type of algorithm is called integrated variance implicit (IVI), and has been shown to be viable for the regular Heston model~\cite{jaber2024simulation}. However, as the method relies on an endpoint discretization of a time integral, the linear approximation that governs the implicit relation becomes increasingly inaccurate for larger time steps, bounding its viability as a large time step scheme.

In this section, we introduce a novel IVI scheme called the \emph{constrained linear projection (C-LP) scheme} for the lifted Heston model that is capable of taking larger time steps. The scheme derives the implicit relation through a linear projection onto a subspace in $L^2$, yielding an optimal linear approximation. The method avoids the trivial integral approximation, and we prove it to be the best linear approximation between $X_{s,t}$ and $Z_{s,t}$ in an $L^2$ sense. The linear projection is derived by calculating conditional moments of $X_{s,t}$ and $Z_{s,t}$, which we obtain directly from the stochastic processes. 
\subsection{Explicit solution with integrated variance}
\label{sec:explicit_sol}
The starting point for the linear projection scheme is the explicit representation of the log-price process $\log S_t$. We apply a log transformation to the stochastic differential equation of (\ref{eq:lifted_heston}), obtaining the SDE
\begin{equation}
    \d \log S_t = \big(r- \frac{1}{2}V_t\big)\dt + \sqrt{V_t}\dW^1_t.
\end{equation}
In its explicit form, this SDE is equivalent to
\begin{equation}
    \log \frac{S_t}{S_s} = \int_{s}^t \big(r-\frac{1}{2}V_u\big) \du + \int_{s}^t \sqrt{V_u}\dW^1_u, \quad t_0 <s<t.
\end{equation}
Since the Brownian motions $W_t^1,W_t^2$ are correlated with a coefficient $\rho$, we can find orthogonal Brownian motions $\tilde{W}_t^1, \tilde{W}_t^2$, such that $W_t^2 \stackrel{d}{=} \tilde{W}_t^2$ and $W_t^1 \stackrel{d}{=} \rho \tilde{W}_t^2 + \sqrt{(1-\rho^2)} \tilde{W}_t^1$, such that we can write the equation as
\begin{equation}
    \log \frac{S_t}{S_s} \stackrel{d}{=} (t-s)r-\frac{1}{2}\int_{s}^tV_u \du + \rho\int_{s}^t \sqrt{V_u}\d\tilde{W}^2_u + \sqrt{(1-\rho^2)}\int_{s}^t \sqrt{V_u}\d\tilde{W}^1_u.
\end{equation}
This shows that the log price $S_t$ given $S_s$ is given by the deterministic interest rate drift and three stochastic quantities. The second term $\int_{s}^tV_u \du $ is the integrated variance between $s$ and $t$. The third and fourth terms are stochastic integrals with quadratic variation 
\begin{equation}
    \left\langle\int_{s}^t \sqrt{V_u}\d\tilde{W}^2_u\right\rangle = \int_{s}^t V_u \d u,  \quad \left\langle\int_{s}^t \sqrt{V_u}\d\tilde{W}^1_u\right\rangle = \int_{s}^t V_u \d u.
\end{equation} Since the integrand $\sqrt{V_u}$ is independent of $\tilde{W}^1_u$, we can sample the third term from a normal distribution once we have computed the integrated variance $\int_{s}^t V_u \d u$. The forward simulation $S_t$ therefore reduces to obtaining a simulation of the integrated variance $X_{s,t} := \int_{s}^t V_u \d u$ and the stochastic integral $Z_{s,t} := \int_{s}^t \sqrt{V_u}\d\tilde{W}^2_u$. We can then write the log price process as
\begin{equation}
\label{eq:main_process}
    \log \frac{S_t}{S_s} \stackrel{d}{=} (t-s)r-\frac{1}{2}X_{s,t} + \rho Z_{s,t} + \sqrt{(1-\rho^2)X_{s,t} }\,N_{s,t}, \quad N_{s,t} \sim \mathcal{N}(0,1).
\end{equation}
 For a defined integrated state process $X^n_{s,t} := \int_{s}^t U^n_u \d u, n \leq N$, we express the variance increment in terms of these variables:
 \begin{equation}
 \label{eq:main_variance}
     V_t = V_s + g_0(t)- g_0(s)  + \sum_{n=1}^N -x_n X^n_{s,t} - \lambda X_{s,t} + \nu Z_{s,t}.
 \end{equation}
The simulation scheme therefore requires sampling of the quantities $X_{s,t}, Z_{s,t}$ and $X^n_{s,t}, n \leq N$. The core idea is to sample $X_{s,t}$ through an inverse Gaussian distribution with given parameters, and then derive the values for $Z_{s,t}$ and $X^n_{s,t}, n \leq N$ from these samples. We derive the distribution parameter for the inverse Gaussian distribution. The main observation is that the quadratic variation of the stochastic integral $Z_{s,t}$ is exactly equal to $X_{s,t}$. The objective of the scheme is then to obtain a \emph{linear equation} between the two random variables $X_{s,t}$ and $Z_{s,t}$, which enables a stopping time interpretation of the two quantities.  Suppose we assume two processes $t \mapsto \hat{X}_{s,t}, t\mapsto \hat{Z}_{s,t}$, such that the quadratic variation of $\hat{Z}_{s,t}$ is equal to $\hat{X}_{s,t}$. Furthermore, assume a linear transformation for the random variables $\hat{X}_{s,t}$ and $\hat{Z}_{s,t}$, i.e. coefficients $\alpha, \beta$, such that
\begin{equation}
    \label{eq:linear}
    \hat{X}_{s,t} = \alpha + \beta \hat{Z}_{s,t}, \quad a.s.
\end{equation}
In this case, since the quadratic variation $\qv{\hat{Z}}_{s,t}$ is almost surely equal to $\hat{X}_{s,t}$, the process $t \mapsto \hat{Z}_{s,t}, t\geq s$ is a local martingale, as it is a stochastic integral, and has a quadratic variation process $t \mapsto \hat{X}_{s,t}$.  We can then leverage the Dubins-Schwarz theorem~\cite{dubins1965continuous}, which states the following: 
\begin{lem}[Dubins-Schwarz~\cite{dubins1965continuous}]
Let $t\mapsto \hat{Z}_{s,t}$ be a local martingale with quadratic variation process $\qv{\hat{Z}}_{s,t} = \hat{X}_{s,t}$. Let $T_t, t \geq s$ be a sequence of random times $T_t = \inf\{u : \qv{\hat{Z}}_{s,u} > t\}$. Then, the process $$\hat{W}_t := \hat{Z}_{s,T_t},$$ is a Brownian motion and we have that $\hat{Z}_{s,t} = \hat{W}_{\qv{\hat{Z}_{s,t}}} = \hat{W}_{\hat{X}_{s,t}}$.
\end{lem}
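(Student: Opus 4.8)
The statement to prove is the Dubins--Schwarz theorem as presented: given a local martingale $t\mapsto \hat Z_{s,t}$ with quadratic variation $\qv{\hat Z}_{s,t} = \hat X_{s,t}$, the time-changed process $\hat W_t := \hat Z_{s,T_t}$ with $T_t = \inf\{u : \qv{\hat Z}_{s,u} > t\}$ is a Brownian motion, and $\hat Z_{s,t} = \hat W_{\hat X_{s,t}}$.

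\textbf{Proof strategy.} The plan is to follow the classical continuous-martingale representation argument. First I would set up the time change: define the family of stopping times $T_t = \inf\{u \ge s : \hat X_{s,u} > t\}$ and note that because $\hat X_{s,\cdot}$ is continuous, nondecreasing, and (by assumption on the problem, the integrated variance being a.s.\ strictly increasing) strictly increasing to $+\infty$, the map $t\mapsto T_t$ is the genuine inverse of $u \mapsto \hat X_{s,u}$, hence continuous and strictly increasing, with $\hat X_{s,T_t} = t$. Each $T_t$ is an $\mathbb{F}$-stopping time, and I would introduce the time-changed filtration $\mathcal{G}_t := \mathcal{F}_{T_t}$, which is right-continuous. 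Then $\hat W_t := \hat Z_{s,T_t}$ is $\mathcal{G}_t$-adapted and, because $\hat Z_{s,\cdot}$ is continuous and $T_t$ is continuous, $\hat W$ has continuous paths.

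\textbf{Key steps.} The heart of the argument is L\'evy's characterization: a continuous $\mathcal{G}_t$-local martingale with quadratic variation process equal to $t$ is a Brownian motion. So I would verify two things. (i) $\hat W$ is a $\mathcal{G}_t$-local martingale: this follows from the optional stopping theorem applied along the stopping times $T_t$, after localizing $\hat Z$ with a reducing sequence $\tau_k$ so that $\hat Z_{s, \cdot \wedge \tau_k}$ is a true (uniformly integrable on bounded intervals) martingale; one checks the stopped processes $\hat Z_{s, T_t \wedge \tau_k}$ are $\mathcal{G}_t$-martingales and that $T_t \wedge \tau_k$ furnishes a reducing sequence for $\hat W$. (ii) $\qv{\hat W}_t = t$: the quadratic variation of the time-changed martingale is the time-changed quadratic variation, $\qv{\hat W}_t = \qv{\hat Z}_{s,T_t} = \hat X_{s,T_t} = t$. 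Equivalently, one shows $\hat W_t^2 - t$ is a $\mathcal{G}_t$-local martingale by the same optional-stopping argument applied to $\hat Z_{s,\cdot}^2 - \hat X_{s,\cdot}$. L\'evy's theorem then gives that $\hat W$ is a $(\mathcal{G}_t)$-Brownian motion. Finally, evaluating the identity $\hat W_t = \hat Z_{s,T_t}$ at $t = \hat X_{s,u}$ and using $T_{\hat X_{s,u}} = u$ (valid by strict monotonicity of $\hat X$) yields $\hat Z_{s,u} = \hat W_{\hat X_{s,u}}$, which is the claimed representation.

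\textbf{Main obstacle.} The delicate point is the continuity/invertibility of the time change and the careful localization in step (i): without strict monotonicity of $\hat X_{s,\cdot}$ one only gets $T_t$ right-continuous and the identity $\hat Z_{s,u} = \hat W_{\hat X_{s,u}}$ requires the extra observation that $\hat Z$ is constant on the (here empty) intervals of constancy of $\hat X$. In the present setting the integrated variance $\hat X_{s,t} = \int_s^t \hat V_u\,\du$ is strictly increasing and unbounded (the variance stays positive), so these pathologies do not arise and the inverse time change is a bona fide continuous strictly increasing bijection; I would state this explicitly as the regularity input that makes the clean statement above valid. The remaining verifications — adaptedness, the optional stopping identities, and the application of L\'evy's characterization — are then standard and I would cite~\cite{dubins1965continuous} for the classical result while spelling out only the points specific to our integrated-variance time change.
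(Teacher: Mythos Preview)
The paper does not prove this lemma at all: it is stated as a classical result and attributed to~\cite{dubins1965continuous}, with no argument given. Your proposal therefore goes beyond what the paper does, supplying the standard textbook proof via L\'evy's characterization (time-change the local martingale by the inverse of its quadratic variation, check the time-changed process is a continuous local martingale with $\qv{\hat W}_t = t$, invoke L\'evy). That outline is correct and is the canonical route; there is nothing to compare against in the paper itself. Your remark about the regularity input (strict monotonicity and unboundedness of $\hat X_{s,\cdot}$) is a sensible addition, since the lemma as stated in the paper tacitly assumes $\qv{\hat Z}_{s,\infty} = \infty$ for the inverse to be everywhere defined.
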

After application of the lemma to $Z_{s,t}$, the \Cref{eq:linear} becomes
\begin{equation}
     \hat{X}_{s,t} = \alpha  + \beta \hat{W}_{\hat{X}_{s,t}}, \quad a.s.
\end{equation}
The equation implies that if random variable takes value $\hat{X}_{s,t}= t$, then we have $\hat{W}_t = \frac{t - \alpha}{\beta}$. Therefore, we can write $\hat{X}_{s,t}$ as the first passage time of the Brownian motion $\hat{W}_t$:
\begin{equation}
    \hat{X}_{s,t} \stackrel{d}{=} \inf \left\{t \geq 0 : t  - \beta \hat{W}_t  = \alpha \right\}.
\end{equation}
The first passage time of a Brownian motion with positive drift of a level $\alpha$ is well-known to follow an inverse Gaussian distribution~\cite{folks1978inverse} $IG(\mu,\nu)$ with parameters $\mu = \alpha$, and $\gamma = \frac{\alpha^2}{\beta^2}$. The IG distribution has a probability density function $f_{IG(\mu,\gamma)}$ given by 
\begin{equation}
     f_{\hat{X}_{s,t}}(x)= f_{IG(\mu,\gamma)}(x) = \sqrt{\frac{\gamma}{2\pi x^3}} \exp\left(\frac{\gamma (x-\mu)^2}{2\mu^2x}\right), \quad x > 0.
\end{equation}
We can now define a simulation scheme: We define approximation variables $\hat{X}_{s,t}, \hat{Z}_{s,t}$, such that $\hat X_{s,t} = \alpha + \beta \hat Z_{s,t}$ almost surely, and then conclude that $\hat X_{s,t} \sim IG(\mu,\gamma)$. We can subsequently obtain samples of $\hat X_{s,t}$ by sampling from an IG distribution with given parameters, and derive $\hat Z_{s,t}$ from the given equation.  These samples are sufficient to forward-propagate the quantities $S_t$ through \Cref{eq:main_process}, using $ X_{s,t} \approx \hat X_{s,t}$ and $Z_{s,t} \approx \hat Z_{s,t}$. The samples for random variables $\hat X^n_{s,t}, n \leq N$ are then derived from $\hat Z_{s,t}$, also from a set of linear equations of the same form, a step which we will elaborate in \Cref{sec:proj}. First, we consider how to obtain a linear equation \Cref{eq:linear}, such that the random variables $\hat X_{s,t}$ and $\hat Z_{s,t}$ can be sampled. 
\subsection{A linear approximation between $X_{s,t}$ and $Z_{s,t}$}
Both $X_{s,t}$ and $Z_{s,t}$ are stochastic variables which share a dependency through the dynamics of the individual variance processes $U^n_t$, as seen from (\ref{eq:lifted_heston_end}). For our scheme, we require an approximate linear dependency between the variables, which we derive using a \emph{linear projection approach}. We obtain the linear equation by projecting the random variable $Z_{s,t}$ onto the linear subspace spanned by $X_{s,t}$ in $L^2$. The projection is derived from the conditional moments of $(X_{s,t}, Z_{s,t})$ on $\F_s$, which we obtain from the dynamics of the processes of the model, and we show how to obtain them in \Cref{lem:exp} and \Cref{lem:cov}. By definition the linear projection minimizes the $L^2$ error $\E[(\hat{X}_{s,t} - X_{s,t})^2 | \F_s]$ between the approximation $\hat{X}_{s,t} = \alpha + \beta \hat{Z}_{s,t}$ and the true variable $X_{s,t}$. The approach yields the \emph{best linear approximation} in a mean-squared sense (see \Cref{lem:optimal}), and avoids the approximation of the integral altogether.

\begin{rem}[Conditioning on $\F_s$]
    In the following results, we derive the conditional moments of the stochastic quantities required for the incremental changes as in (\ref{eq:main_process}),(\ref{eq:main_variance}), where the conditioning is always on the filtration $\F_s$ at the beginning of the increment. When it is clear, we will use the shorter notation $\E_s[X_{s,t}] := \E[X_{s,t} | \F_s]$ to denote the \emph{conditional expectation} on $\F_s$ for increased readability.
\end{rem}
The lemma below shows how to derive the linear projection given the moments of $(X_{s,t}, Z_{s,t})$.
\begin{lem}[Optimal Linear Projection]
\label{lem:optimal}
    Let $s<t$ be fixed and let $\hat{X}_{s,t}(\alpha,\beta) = \alpha + \beta Z_{s,t}$ be a linear predictor for the random variable $X_{s,t}$. The expression $\E_s\left[(X_{s,t} - \hat{X}_{s,t}(\alpha,\beta))^2\right]$ has a global minimum at $\alpha= \E_s[X_{s,t}] $ and $ \beta = \frac{\E_s[Z_{s,t}X_{s,t}]}{\E_s[Z_{s,t}^2]}$. 
\end{lem}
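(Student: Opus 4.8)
The plan is to treat the objective as a convex quadratic in the predictor coefficients and locate its stationary point. Write $g(\alpha,\beta) := \E_s\big[(X_{s,t}-\alpha-\beta Z_{s,t})^2\big]$. Since the coefficients $\alpha,\beta$ are allowed to be $\F_s$-measurable, the minimization can be carried out pointwise in $\omega$: it suffices to minimize the deterministic map $(a,b)\mapsto\E_s[(X_{s,t}-a-bZ_{s,t})^2](\omega)$ over $(a,b)\in\R^2$ for a.e.\ fixed $\omega$, and then recollect the minimizers as $\F_s$-measurable random variables (they are explicit continuous functions of $\F_s$-measurable quantities, so measurability is automatic). Expanding,
\begin{equation*}
g(\alpha,\beta)=\E_s[X_{s,t}^2]-2\alpha\,\E_s[X_{s,t}]-2\beta\,\E_s[Z_{s,t}X_{s,t}]+\alpha^2+2\alpha\beta\,\E_s[Z_{s,t}]+\beta^2\,\E_s[Z_{s,t}^2],
\end{equation*}
which is finite because the lifted Heston variance satisfies $\E_s[\int_s^t V_u\,\du]<\infty$ (hence $\E_s[Z_{s,t}^2]<\infty$) and $\E_s[X_{s,t}^2]<\infty$. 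The key structural fact is that $Z_{s,t}=\int_s^t\sqrt{V_u}\,\d\tilde W^2_u$ is a \emph{true} conditional martingale increment, so $\E_s[Z_{s,t}]=0$; the cross term $2\alpha\beta\,\E_s[Z_{s,t}]$ drops out and the objective separates as a function of $\alpha$ plus a function of $\beta$.

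Next I would impose first-order optimality. Differentiating, $\partial_\alpha g = 2\alpha - 2\E_s[X_{s,t}]$ and $\partial_\beta g = 2\beta\,\E_s[Z_{s,t}^2] - 2\E_s[Z_{s,t}X_{s,t}]$, so the unique stationary point is $\alpha=\E_s[X_{s,t}]$ and $\beta=\E_s[Z_{s,t}X_{s,t}]/\E_s[Z_{s,t}^2]$, which is exactly the asserted pair. To upgrade this to a global minimum, observe that the Hessian of $g$ is the constant, positive semidefinite matrix $\mathrm{diag}(2,\,2\E_s[Z_{s,t}^2])$; hence $g$ is convex and any critical point is a global minimizer (the minimizer being unique once $\E_s[Z_{s,t}^2]>0$). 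Equivalently, one may phrase this as the conditional $L^2$ (Hilbert-space) projection of $X_{s,t}$ onto $\mathrm{span}\{1,Z_{s,t}\}$: because $\E_s[Z_{s,t}]=0$, the generators $1$ and $Z_{s,t}$ are orthogonal in the inner product $(\cdot,\cdot)\mapsto\E_s[\cdot\,\cdot]$, so the two projection coefficients decouple and are computed separately, giving the same $\alpha$ and $\beta$.

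The only point needing a word of care — and the closest thing to an obstacle — is the degenerate case $\E_s[Z_{s,t}^2]=0$, in which $\beta$ is undefined; but $\E_s[Z_{s,t}^2]=\E_s[\int_s^t V_u\,\du]>0$ under the model's standing positivity of $V$, so this does not arise. One should also record the integrability justifying $\E_s[Z_{s,t}]=0$ (so that $Z_{s,t}$ is a genuine, not merely local, martingale increment), which again follows from the finiteness of $\E_s[\int_s^t V_u\,\du]$ for the lifted Heston variance. Everything else is the routine first-order computation for a convex quadratic.
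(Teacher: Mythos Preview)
Your proof is correct and follows essentially the same approach as the paper's: both compute the partial derivatives of the quadratic objective, invoke $\E_s[Z_{s,t}]=0$ to decouple the equations, solve for the stationary point, and then appeal to convexity (via the second-order conditions/Hessian) to conclude that the stationary point is a global minimum. Your additional remarks on measurability, integrability, and the degenerate case $\E_s[Z_{s,t}^2]=0$ are welcome bits of rigor but do not change the route.
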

\begin{proof}
    The partial derivatives of $(a,b) \mapsto \E_s\left[(X_{s,t} - \hat{X}_{s,t}(\alpha,\beta))^2\right]$ are given by
    \begin{equation*}
        \frac{\partial }{\partial \alpha}\E_s\left[(X_{s,t} - \hat{X}_{s,t}(\alpha,\beta))^2\right] = 2\E_s[\alpha + \beta Z_{s,t} - X_{s,t}],
    \end{equation*}
    and
        \begin{equation*}
        \frac{\partial }{\partial \beta}\E_s\left[(X_{s,t} - \hat{X}_{s,t}(\alpha,\beta))^2\right] = 2\E_s[(\alpha + \beta Z_{s,t} - X_{s,t})Z_{s,t}].
    \end{equation*}
        Setting the derivatives equal to $0$ we obtain from the first equation that $\alpha = \E_s[X_{s,t}]$ since $\E_s[Z_{s,t}] = 0$, and from the second equation that \[\E_s[Z_{s,t}X_{s,t}] = \beta \E_s[Z_{s,t}^2] \iff \beta = \frac{\E_s[Z_{s,t}X_{s,t}]}{\E_s[Z_{s,t}^2]}.\]
    From the second-order derivatives, we confirm that the extreme point is a local minimum, and since the function is convex, the local minimum is a global minimum.
    \end{proof}
We obtain the optimal linear predictor for $X_{s,t}$ given $Z_{s,t}$ by evaluating the expectation of $\E_s[X_{s,t}],\E_s[Z^2_{s,t}]$ and the covariance $\E_s[Z_{s,t}X_{s,t}]$. Since the quadratic variation $\qv{Z}_{s,t}$ of the diffusion term $Z_{s,t}$ is $\qv{Z}_{s,t}= X_{s,t}$, we also have that $\E_s[X_{s,t}] =\E_s[Z^2_{s,t}]$ by It\^o's isometry, which leaves two quantities to be determined for the linear predictor. We will now show how to derive these quantities directly from the dynamics of the processes given by (\ref{eq:lifted_heston} - \ref{eq:lifted_heston_end}) in the subsequent two theorems. As a byproduct, we also derive the quantities $\E_s[X^n_{s,t}]$ and $\E_s[X^n_{s,t}Z_{s,t}]$ for all $n\leq N$, which will be use subsequently as well. The derivation involves computing the respective quantities on the state processes $U_t^n$ themselves, by solving $N$-dimensional linear systems, and then aggregating the processes to the dynamics of $V_t$. 
\begin{theorem}[Expectation of Integrated Variance]
\label{lem:exp}
    Consider the lifted Heston model given by equations (\ref{eq:lifted_heston} - \ref{eq:lifted_heston_end}) and consider a pair of times $s,t > 0$ such that $s<t$. Let $\bs{\omega} = (\omega_1, \omega_2, \dots, \omega_N)$,  $\bs{U}_s = (U_s^1,U_s^2, \dots, U_s^N)^T$ and the matrix $A$ be
    \begin{align}
    \nonumber A &= -\lambda \bs{1}_N\bs{\omega} - \text{diag}(\bs{x})=-\lambda 
\begin{pmatrix}
\omega_1 & \omega_2 & \cdots & \omega_N \\
\omega_1 & \omega_2 & \cdots & \omega_N \\
\vdots & \vdots & \ddots & \vdots \\
\omega_1 & \omega_2 & \cdots & \omega_N \\
\end{pmatrix}
-
\begin{pmatrix}
x_1 & 0 & \cdots & 0 \\
0 & x_2 & \cdots & 0 \\
\vdots & \vdots & \ddots & \vdots \\
0 & 0 & \cdots & x_N \\
\end{pmatrix},
\end{align}
with $\bs{1}_N = (1,1,\dots,1)^N$ the one-vector.
We can express the expectation of the integrated variance and state processes $X_{s,t}= \int_{s}^t V_{u} \d u$ and $X^n_{s,t} = \int_{s}^t U^n_{u} \d u$, such that
    \begin{align}
        \nonumber \E_s[X_{s,t}] &= \sum_{n=1}^N \omega_n \E_s[X^n_{s,t}] + \int_s^t g_0(u) \d u \\
        &= \bs{\omega} \left(A^{-1} \left(\e^{A(t-s)} - I_N\right)\boldsymbol{U}_s + \bs{\xi}_t\right)+ \int_s^t g_0(u) \d u,
    \end{align}
    where $I_N$ the identity matrix of size $N$ and $\bs{\xi}_t \in \R^N$ a solution to the initial value problem
        \begin{equation}
        \frac{\d \bs{\xi}_u}{\d u}  = A\bs{\xi}_u - \lambda G_0(s,u) \boldsymbol{1}_N,
    \end{equation} between $[s,t]$ with initial value $\bs{\xi}_s = \boldsymbol{0}_N := (0,0,\dots,0)^T$.
\end{theorem}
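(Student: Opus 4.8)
The plan is to reduce the claim to a deterministic linear ODE for the conditional mean of the state vector $\bs{U}_t$ and then integrate in time. Collecting the integrated states into $\bs{X}_{s,t}:=(X^1_{s,t},\dots,X^N_{s,t})^T=\int_s^t\bs{U}_u\,\du$, equation~(\ref{eq:lifted_heston_middle}) gives at once $X_{s,t}=\bs{\omega}\cdot\bs{X}_{s,t}+\int_s^t g_0(u)\,\du$, so after applying $\E_s[\cdot]$ the whole statement reduces to computing $\E_s[\bs{X}_{s,t}]$. To that end I would rewrite~(\ref{eq:lifted_heston_end}) in matrix form,
\[
\d\bs{U}_t=\bigl(-\mathrm{diag}(\bs{x})\bs{U}_t-\lambda V_t\bs{1}_N\bigr)\dt+\nu\sqrt{V_t}\,\bs{1}_N\,\d W_t,
\]
and substitute $V_t=g_0(t)+\bs{\omega}\cdot\bs{U}_t$; using $(\bs{\omega}\bs{U}_t)\bs{1}_N=(\bs{1}_N\bs{\omega})\bs{U}_t$ the drift collapses to $A\bs{U}_t-\lambda g_0(t)\bs{1}_N$ with $A=-\lambda\bs{1}_N\bs{\omega}-\mathrm{diag}(\bs{x})$, precisely the matrix in the statement.

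Next I would take $\E_s[\cdot]$ of the integrated dynamics. Granting (see the last paragraph) that $u\mapsto\int_s^u\nu\sqrt{V_r}\,\d W_r$ is a true martingale on $[s,t]$, the diffusion term drops out and $\bs{m}_u:=\E_s[\bs{U}_u]$ solves the linear IVP $\bs{m}_u'=A\bs{m}_u-\lambda g_0(u)\bs{1}_N$, $\bs{m}_s=\bs{U}_s$. Variation of constants yields $\bs{m}_u=\e^{A(u-s)}\bs{U}_s-\lambda\int_s^u\e^{A(u-r)}g_0(r)\,\dr\,\bs{1}_N$, hence $\E_s[\bs{X}_{s,t}]=\int_s^t\bs{m}_u\,\du$. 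The first term integrates to $A^{-1}\bigl(\e^{A(t-s)}-I_N\bigr)\bs{U}_s$; here $A$ is invertible whenever $x_n>0$, since by the matrix-determinant lemma $\det A=\bigl(\prod_{n=1}^N(-x_n)\bigr)\bigl(1+\lambda\sum_{n=1}^N\omega_n/x_n\bigr)\neq0$ (in the degenerate Heston-collapse case $x_1=0$ one simply leaves $\int_s^t\e^{A(u-s)}\,\du$ unevaluated).

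It then remains to identify the remaining contribution with $\bs{\xi}_t$. Define $\bs{\xi}_u:=\int_s^u\bigl(\bs{m}_r-\e^{A(r-s)}\bs{U}_s\bigr)\,\dr$, which makes the displayed formula hold by construction. Writing $\bs{n}_r:=\bs{m}_r-\e^{A(r-s)}\bs{U}_s=-\lambda\int_s^r\e^{A(r-\rho)}g_0(\rho)\,\d\rho\,\bs{1}_N$, one checks directly that $\bs{n}_s=\bs{0}_N$ and $\bs{n}_r'=A\bs{n}_r-\lambda g_0(r)\bs{1}_N$. Integrating the identity $A\bs{n}_r=\bs{n}_r'+\lambda g_0(r)\bs{1}_N$ over $[s,u]$ and using $\bs{n}_s=\bs{0}_N$ gives $A\bs{\xi}_u=\bs{n}_u+\lambda G_0(s,u)\bs{1}_N$ with $G_0(s,u):=\int_s^u g_0(r)\,\dr$, i.e.\ $\bs{\xi}_u'=\bs{n}_u=A\bs{\xi}_u-\lambda G_0(s,u)\bs{1}_N$ together with $\bs{\xi}_s=\bs{0}_N$ — exactly the stated initial value problem. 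Reading off the $n$-th component gives the formula for $\E_s[X^n_{s,t}]$, and aggregating against $\bs{\omega}$ and adding $\int_s^t g_0(u)\,\du$ gives $\E_s[X_{s,t}]$.

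The one genuinely delicate point is the martingale claim used above: $u\mapsto\int_s^u\nu\sqrt{V_r}\,\d W_r$ is a priori only a local martingale, and it is a true martingale on $[s,t]$ precisely when $\E_s\int_s^t V_r\,\dr<\infty$, which is the very quantity being computed. I would break the circularity by localising at $\tau_k:=\inf\{u:|\bs{U}_u|>k\}$, taking expectations of the stopped dynamics, applying Gr\"onwall to bound $\E_s[\,|\bs{U}_{u\wedge\tau_k}|\,]$ uniformly in $k$, and passing to the limit by monotone/dominated convergence; alternatively, one may simply invoke the finite-moment property of the lifted Heston model established in~\cite{abi2019lifting}. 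The rest is routine linear-ODE bookkeeping.
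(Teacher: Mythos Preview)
Your argument is correct and lands on the same decomposition as the paper, but the order of operations differs. The paper works directly at the level of the \emph{integrated} state vector $\bs{\mu}_t:=\E_s[\bs{X}_{s,t}]$: from the dynamics of $U^n$ it derives the integral equation $\bs{\mu}_t=\int_s^t(\bs{U}_s+A\bs{\mu}_u-\lambda G_0(s,u)\bs{1}_N)\,\du$, then splits $\bs{\mu}=\bs{\eta}+\bs{\xi}$ with $\bs{\eta}'=\bs{U}_s+A\bs{\eta}$ (carrying the $\bs{U}_s$ dependence) and $\bs{\xi}'=A\bs{\xi}-\lambda G_0(s,\cdot)\bs{1}_N$. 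You instead first write the ODE for the \emph{instantaneous} mean $\bs{m}_u=\E_s[\bs{U}_u]$, solve it by variation of constants, and only afterwards integrate in $u$; the $\bs{\xi}$ equation is then recovered by the short integration-by-parts you carry out. Your route is arguably the more direct one (it is the textbook computation of the mean of a linear SDE) and makes the structure $\E_s[\bs{X}_{s,t}]=\int_s^t\bs{m}_u\,\du$ transparent; the paper's route never handles $g_0$ pointwise, only its antiderivative $G_0$, which fits slightly better with the later covariance calculation where $G_0$ reappears. You also add two points of rigor the paper glosses over: the invertibility of $A$ via the matrix-determinant lemma, and an explicit justification (via localisation/Gr\"onwall) that the stochastic integral is a true martingale so that $\E_s[Z_{s,u}]=0$.
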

\begin{proof}
The exception $\E_s[X_{s,t}]$ is derived as the solution of a multidimensional initial value problem for the integrated state processes $X_{s,t}^n := \int_s^t U^n_u \d u$. The first step is to express the expectation of $X_{s,t}$ as a linear combination of $X^n_{s,t}$. From \Cref{eq:lifted_heston_middle}, we find
\[\E_s[X_{s,t}] = G_0(s,t) + \sum_{n=1} \omega_n \E_s[X_{s,t}^n],\]
where $G_0(s,t) := \int_s^t g_0(u) \d u$.
We will now derive a system of equations for $X^n_{s,t}$ for all $n \leq N$. Using the state processes dynamics of \Cref{eq:lifted_heston_end}, we derive that for $n \leq N$, we have
\begin{align*}
\E_s[X^n_{s,t}] &= (t-s)U_s^n + \int_s^t -x_n \E_s\left[ \int_s^u U^n_l \d l\right] -  \lambda \E_s\left[\int_s^u V_l \d l\right] \d u \\
&= (t-s)U_s^n  + \int_s^t -x_n \E_s\left[ X_{s,u}^n\right] -  \lambda \E_s\left[X_{s,u}\right] \d u \\
&= (t-s)U_s^n  +\int_s^t -x_n \E_s\left[ X_{s,u}^n\right] -  \lambda \left[G_0(s,u) + \sum_{m=1} \omega_m \E_s[X_{s,u}^m] \right] \d u. \end{align*}
Defining $\mu_t^n = \E_s[X_{s,t}^n]$ and $\boldsymbol{\mu}_t = (\mu_t^1, \mu_t^2,\dots,\mu_t^N)^T$, the array of expectations can be expressed as a linear system
\[\boldsymbol{\mu}_t = \int_{s}^t \boldsymbol{U}_s+ A \boldsymbol{\mu}_u - \lambda G_0(s,u)\boldsymbol{1}_N \d u,\]
or equivalently
\[\frac{\d \boldsymbol{\mu}_t}{\d t} = \boldsymbol{U}_s + A \boldsymbol{\mu}_t - \lambda G_0(s,t)\boldsymbol{1}_N. \]
We now split the system into two subproblems $\bs{\mu}_u = \bs \eta_u + \bs \xi_u, u \in [s,t]$, where $\bs{\eta}$ is stochastic containing $\boldsymbol{U}_s$, and $\bs{\xi}$ a deterministic term. Let $\bs{\eta}_t$ and $\bs{\xi}_t$ be such that\[\frac{\d \boldsymbol{\eta}_t}{\d t} = \boldsymbol{U}_s + A \boldsymbol{\eta}_t, \quad \frac{\d \boldsymbol{\xi}_t}{\d t} =  A \boldsymbol{\xi}_t - \lambda G_0(s,t)\boldsymbol{1}_N,\quad \bs\eta_s = \bs\xi_s = 0.\]

In this case, we find that $\boldsymbol{\eta}_t + \boldsymbol{\xi}_t$ solves the original system, since
\begin{align*}
\frac {\d}{ \dt} (\boldsymbol{\eta}_t + \boldsymbol{\xi}_t) &= \frac {\d}{ \dt}\boldsymbol{\eta}_t +\frac {\d}{ \dt} \boldsymbol{\xi}_t \\
&= \boldsymbol{U}_s + A \boldsymbol{\eta}_t + A \boldsymbol{\xi}_t - \lambda G_0(s,t)\boldsymbol{1}_N\\
&= \boldsymbol{U}_s + A (\boldsymbol{\eta}_t + \boldsymbol{\xi}_t )- \lambda G_0(s,t)\boldsymbol{1}_N.\end{align*}
We can therefore solve the subsystems separately and add their solutions. We begin with $\boldsymbol{\eta}_t$, which is an inhomogeneous system and its solution is
\[\boldsymbol{\eta}_t = \int_s^t \e^{A(t-u)} \du \, \boldsymbol{U}_s = A^{-1} \left(\e^{A(t-s)} - I_N\right)\boldsymbol{U}_s,\]
where $\e^A$ is the matrix exponential, i.e. the power series expansion $\sum_{n=0}^\infty \frac{A^n}{n!}$. Similarly, the second subproblem is also inhomogeneous, but since $G_0$ depends on $t$ as well, we have
\[\boldsymbol{\xi}_t = \int_s^t -\e^{A(t-u)} \lambda G_0(s,u)\boldsymbol{1}_N \du.\]
Since $\mu^n_t = \E_s[X_{s,t}^n]$, we find
\[\E_s[X_{s,t}] = \bs{\omega} \boldsymbol{\mu}_t + G_0(s,t) = \bs{\omega} \left(A^{-1} \left(\e^{A(t-s)} - I_N\right)\boldsymbol{U}_s + \bs{\xi}_t\right) + G_0(s,t).\]
\end{proof}
\begin{rem}[Numerically solving $\bs\xi$]
    To calculate the expectation of $X_{s,t}$, we require, in most cases\footnote{depends on the exact choice of $g_0$}, a numerical solver for the initial value problem of $\bs{\xi}_t$. These can be solved using standard libraries, such as \texttt{scipy.integrate.solve\_ivp}. Note that we could have also solved the differential equation $\bs{\mu}_t$ directly using such a numerical solver. This turns out to be infeasible, since the value then depends on $\bs{U}_s$, which differs path-wise, meaning that we need to solve an equation per path. By splitting the equation $\bs \mu_t = \bs \xi_t + \bs \eta_t$, we obtain a part $\bs \xi_t$ which requires a numerical technique, but which is independent of $\bs{U}_s$, and therefore needs to be calculated only once. The term $\bs \eta_t$, on the other hand, can be calculated analytically by a simple matrix multiplication. 
\end{rem}
Note that the expectation of the integrated variance could be expressed as an integration of $V_t$ in $t$ through \Cref{eq:expecation} by interchanging the integration and expectation. However, as the expression is a Volterra equation, it is available analytically and therefore cannot be integrated in a straightforward way. For this reason, we derive the expectation as shown in the proof. From the theorem we can derive well-known quantities of the Heston model of \Cref{eq:heston_simple}, such as the unconditional mean of variance.
\begin{cor}[Expectation of the Heston Model Variance]
    Let $V_t$ be the variance process in the Heston model and let $s < t$ be fixed times and let $t_0 = 0$. We have $A = -\lambda$ and $G_0(s,t) = V_0(t-s) +\frac{1}{2}\theta\lambda (t^2-s^2)$. The conditional mean of the integrated variance is then given by  \begin{align}
        \nonumber\E_s[X_{s,t}] &= -\frac{1}{\lambda}(\e^{-\lambda (t-s)})U^1_s + \int_{s}^t -\lambda \e^{-\lambda(t-u)} \left[V_0(u-s) +\frac{1}{2}\theta\lambda (u^2-s^2)\right] \d u \\
        &+V_0(t-s) +\frac{1}{2}\theta\lambda (t^2-s^2).
    \end{align}
    In particular, when $s = t_0 = 0$, we get the unconditional expectation $\E[X_t]$ as
    \begin{equation}
        \E[X_{0,t}] = -\frac{1}{\lambda} (V_0-\theta)\e^{-\lambda t} + \theta t + \frac{(V_0-\theta )}{\lambda}.
    \end{equation}
    Taking the derivative of $\E[X_{0,t}]$ we obtain the unconditional expectation of $V_t$ as
    \begin{equation}
    \frac{\d }{\dt}\E[X_{0,t}] = 
        \E[V_t] = (V_0- \theta)\e^{-\lambda t} + \theta.
    \end{equation}
\end{cor}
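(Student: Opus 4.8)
The plan is to derive both displays by specializing \Cref{lem:exp} to the one-factor configuration in which the lifted model collapses to classical Heston --- namely $N=1$, $\omega_1=1$, $x_1=0$, $g_0(t)=V_0+\lambda\theta t$, $t_0=0$ --- and then to carry out the resulting elementary integration. First I would record the ingredients. With $N=1$ the matrix $A=-\lambda\bs 1_N\bs\omega-\text{diag}(\bs x)$ reduces to the scalar $A=-\lambda\cdot 1\cdot 1-0=-\lambda$, which is invertible since $\lambda>0$, so $A^{-1}=-1/\lambda$ and $\e^{A(t-s)}=\e^{-\lambda(t-s)}$. Integrating $g_0$ gives $G_0(s,t)=\int_s^t (V_0+\lambda\theta u)\,\du=V_0(t-s)+\tfrac12\theta\lambda(t^2-s^2)$, the expression quoted in the statement. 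Substituting $\bs\omega=1$, $\bs U_s=U^1_s$ and $\bs\xi_t=-\lambda\int_s^t \e^{-\lambda(t-u)}G_0(s,u)\,\du$ into the formula $\E_s[X_{s,t}]=\bs\omega\big(A^{-1}(\e^{A(t-s)}-I_N)\bs U_s+\bs\xi_t\big)+G_0(s,t)$ of \Cref{lem:exp} then yields the first claimed identity for $\E_s[X_{s,t}]$ directly.

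For the unconditional statement I would set $s=t_0=0$. The initial condition $\bs U_{t_0}=\bs 0_N$ forces $U^1_0=0$, so the term carrying $U^1_s$ vanishes and only $G_0(0,t)+\xi_t$ survives, giving $\E[X_{0,t}]=V_0 t+\tfrac12\theta\lambda t^2-\lambda\int_0^t \e^{-\lambda(t-u)}\big(V_0 u+\tfrac12\theta\lambda u^2\big)\,\du$. The remaining task is to evaluate $\int_0^t \e^{-\lambda(t-u)}u\,\du$ and $\int_0^t \e^{-\lambda(t-u)}u^2\,\du$; both follow from one and two integrations by parts respectively (equivalently, write $\e^{-\lambda(t-u)}=\e^{-\lambda t}\e^{\lambda u}$ and integrate $u^k\e^{\lambda u}$), producing $t/\lambda-(1-\e^{-\lambda t})/\lambda^2$ and $t^2/\lambda-2t/\lambda^2+2(1-\e^{-\lambda t})/\lambda^3$. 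Inserting these and collecting terms, the polynomial contributions $V_0 t$ and $\tfrac12\theta\lambda t^2$ cancel against the matching pieces of the integral, leaving $\E[X_{0,t}]=\tfrac{(V_0-\theta)(1-\e^{-\lambda t})}{\lambda}+\theta t$, which rearranges into the stated closed form $-\tfrac1\lambda(V_0-\theta)\e^{-\lambda t}+\theta t+\tfrac{V_0-\theta}{\lambda}$.

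Finally, differentiating this explicit expression in $t$ gives $(V_0-\theta)\e^{-\lambda t}+\theta$, and since $X_{0,t}=\int_0^t V_u\,\du$ with $V_u\ge 0$ integrable, Tonelli allows the exchange $\E[X_{0,t}]=\int_0^t\E[V_u]\,\du$, so by the fundamental theorem of calculus $\tfrac{\d}{\dt}\E[X_{0,t}]=\E[V_t]$, which is the last identity. This corollary is essentially a verification, so there is no genuine obstacle; the only places demanding care are the bookkeeping in the integration-by-parts step --- confirming that the $V_0 t$ and $\tfrac12\theta\lambda t^2$ terms really cancel --- and the (routine) justification of the interchange $\tfrac{\d}{\dt}\E[X_{0,t}]=\E[V_t]$ via Tonelli together with differentiation under the integral. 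One minor point I would flag while writing the proof: the $U^1_s$-coefficient produced directly by \Cref{lem:exp} is $-\tfrac1\lambda(\e^{-\lambda(t-s)}-1)$ rather than $-\tfrac1\lambda\e^{-\lambda(t-s)}$, but this has no effect on the unconditional formulas since $U^1_0=0$.
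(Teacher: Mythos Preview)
Your proposal is correct and follows exactly the approach the paper intends: the corollary is stated without a separate proof, so the implicit argument is precisely the specialization of \Cref{lem:exp} to $N=1$, $\omega_1=1$, $x_1=0$ followed by elementary integration, which you carry out cleanly. Your flag about the $U^1_s$-coefficient is also accurate --- \Cref{lem:exp} produces $-\tfrac{1}{\lambda}(\e^{-\lambda(t-s)}-1)U^1_s$, so the display in the statement is missing the $-1$, though as you note this is immaterial for the unconditional formulas since $U^1_0=0$.
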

The second term for the linear projection we need to derive is the covariance term $\E_s[X_{s,t}Z_{s,t}]$ between the integrated variance and the diffusion term. We again reduce the problem first into its state-processes by computing $\E_s[X^n_{s,t}Z_{s,t}]$, from which we then aggregate the total covariance. \Cref{lem:cov} shows how to derive the expression. For its derivation, we require a supporting lemma regarding the integration of a specific term.
\begin{lem}
\label{lem:helper}
Let a matrix $A = -\lambda\bs{1}_N\bs{\omega} - \text{diag}(X)$ be a full-rank matrix of size $N$ and let $V,\Lambda$ be the eigen-decompositions of $A$, such that $A=V \Lambda V^{-1}$. We have that the integral $\int_{s}^t \e^{A (t-u)}\bs{1}_N \bs{\omega}\e^{A(u-s)} \du$ with $\bs{1}_N = (1,1,\dots,1)^T$ can be written as
\[ \int_{s}^t \e^{A (t-u)}\bs{1}_N\bs{\omega}\e^{A(u-s)} \du = V E V^{-1},\]
with $E \in \R^{N\times N}$ given by
\begin{align*}
    E_{i,j} &= u_iv_j \frac{\e^{(\lambda_j - \lambda_i)(t-s)} -1}{(\lambda_j - \lambda_i)},\quad\quad i,j \leq N \quad \text{ if } i\neq j, \\
    E_{i,i} &= \e^{\lambda_{i} (t-s)}(t-s),\quad\quad\quad\quad i \leq N,
\end{align*}
where $\bs{u},\bs{v} \in R^N$ are row vectors such that $ V (\bs{1}_N\bs{\omega})V^{-1} = \bs{u} \bs{v}^T$, and $\lambda_1,\lambda_2,\dots,\lambda_N$ are the distinct eigenvalues of $A$.
\end{lem}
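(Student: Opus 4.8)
The plan is to reduce the matrix integral to a collection of scalar integrals by diagonalizing $A$. First I would invoke the hypotheses: $A$ is full-rank and the lemma posits that it has distinct eigenvalues $\lambda_1,\dots,\lambda_N$, hence it is diagonalizable, $A = V\Lambda V^{-1}$ with $\Lambda = \text{diag}(\lambda_1,\dots,\lambda_N)$, and the matrix exponential factors as $\e^{A\tau} = V\e^{\Lambda\tau}V^{-1}$ for every $\tau\in\R$. Substituting this for both $\e^{A(t-u)}$ and $\e^{A(u-s)}$ in the integrand, and pulling the constant factors $V$ and $V^{-1}$ outside the integral, gives
\begin{equation*}
\int_s^t \e^{A(t-u)}\bs{1}_N\bs{\omega}\,\e^{A(u-s)}\,\du = V\left(\int_s^t \e^{\Lambda(t-u)}\, M\, \e^{\Lambda(u-s)}\,\du\right)V^{-1},
\end{equation*}
where $M$ is the conjugate of $\bs{1}_N\bs{\omega}$ sitting between the two diagonal exponentials; this is precisely the announced form $V E V^{-1}$ with $E$ the bracketed integral.

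Next I would exploit the rank-one structure. The matrix $\bs{1}_N\bs{\omega}$ is an outer product, hence of rank one, and conjugation preserves rank, so $M = \bs{u}\bs{v}^{T}$ for appropriate vectors $\bs{u},\bs{v}$, i.e.\ $M_{ij} = u_i v_j$. Since $\e^{\Lambda(t-u)}$ and $\e^{\Lambda(u-s)}$ are diagonal with entries $\e^{\lambda_i(t-u)}$ and $\e^{\lambda_j(u-s)}$, the $(i,j)$ entry of the matrix under the integral sign equals $u_i v_j\,\e^{\lambda_i(t-u)+\lambda_j(u-s)}$, so
\begin{equation*}
E_{ij} = u_i v_j \int_s^t \e^{\lambda_i(t-u)+\lambda_j(u-s)}\,\du .
\end{equation*}
After the substitution $w = u-s$ the integrand becomes $\e^{\lambda_i(t-s)}\e^{(\lambda_j-\lambda_i)w}$. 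For $i\neq j$ the coefficient $\lambda_j-\lambda_i$ is nonzero by distinctness, so an elementary antiderivative and collection of the two boundary terms produces the stated quotient built from $\e^{(\lambda_j-\lambda_i)(t-s)}-1$; for $i=j$ the $w$-dependence disappears and the integral is simply $(t-s)$, giving $E_{ii}$.

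The argument is computationally light, and I do not expect a genuine obstacle; the points that need care are the following. One must justify $\e^{A\tau}=V\e^{\Lambda\tau}V^{-1}$, which is where diagonalizability — guaranteed by the distinctness of the eigenvalues assumed in the statement — is used, and which is also exactly the condition making each off-diagonal $E_{ij}$ well defined. One must also fix a consistent convention for the eigendecomposition so that the rank-one factor called $\bs{u}\bs{v}^{T}$ in the statement is the one that actually arises between $\e^{\Lambda(t-u)}$ and $\e^{\Lambda(u-s)}$ after conjugation. Finally, a useful internal check is that the off-diagonal formula should degenerate continuously to the diagonal one as $\lambda_j\to\lambda_i$, since $\tfrac{\e^{(\lambda_j-\lambda_i)(t-s)}-1}{\lambda_j-\lambda_i}\to t-s$; carrying out this check pins down the bookkeeping of the exponential prefactors in $E_{ij}$ and $E_{ii}$, which is the step most prone to a sign or scaling slip.
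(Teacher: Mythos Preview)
Your proposal is correct and follows essentially the same route as the paper: diagonalize $A$, pull $V$ and $V^{-1}$ outside the integral, use the rank-one structure of the conjugated $\bs{1}_N\bs{\omega}$ to reduce to scalar integrals, and evaluate entrywise. Your closing caveats about the conjugation convention and the exponential prefactors are well placed --- the paper's own proof glosses over these, and the stated formulas for $E_{ij}$ and $E_{ii}$ do in fact appear to omit factors (e.g.\ an $\e^{\lambda_i(t-s)}$ on the off-diagonal, a $u_iv_i$ on the diagonal) that your computation would recover.
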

\begin{proof}
First, we note that since $\bs{1}_N\bs{\omega}$ is of rank one, the matrix in the eigenbasis of $A$ is also of rank one, meaning that we can find vectors $\bs{u},\bs{v}$ using a singular value decomposition of $$Q := V\cdot \bs{1}_N\bs{\omega}\cdot V^{-1}.$$ We find matrices $L,M$ and a vector $\bs{s} = (s_1,0,\dots,0) \in \R^N$ with positive constant $s_1$, such that $Q = L \cdot\text{diag}(\bs{s}) \cdot M^T$. Then, we find $\bs{u},\bs{v}$ as the first row vectors of $L,M$, respectively, scaled by $\sqrt{s_1}$. We can then write the integral as
\[
\int_{s}^t \e^{A (t-u)}\bs{1}_N\bs{\omega}\e^{A(u-s)} \du = V \left[\int_{s}^t \e^{\Lambda (t-u)}\bs{u}\bs{v}^T\e^{\Lambda(u-s)} \du \right] V^{-1}.\]
Since $\Lambda$ is diagonal, so is $\e^{\Lambda}$, and the integrant reduces element-wise to the expression
\[ I_{i,j} = u_iv_j \e^{\lambda_i (t-u) - \lambda_j (u-s)}.\]
After element-wise integration from $s$ to $t$, we see that the integral of $I$ is $E$, and therefore conclude the proof of the lemma.
\end{proof}
The lemma will be used in the following result.
\begin{theorem}[Covariance of Integrated Variance and Diffusion]
    \label{lem:cov} 
    Consider the lifted Heston model given by equations (\ref{eq:lifted_heston} - \ref{eq:lifted_heston_end}) and let $A$ be a matrix as in \Cref{lem:exp}. Furthermore, let $A = V \Lambda V^{-1}$ be its eigen-decomposition. For a pair of times $s,t > 0$ such that $s<t$, we have that the covariance between $X_{s,t}$ and $Z_{s,t}$ is given by
    \begin{align}
\nonumber \E_s[X_{s,t}Z_{s,t}] = \sum_{n=1}^N\omega_n  \E_s[X^n_{s,t}Z_{s,t}] &=  \bs{\omega} \bigg[ \nu\left[ VEV^{-1} - A^{-1}(\e^{A(t-s)} -I)\bs{1}_N\bs{\omega} \right] A^{-1} \bs{U}_s \\
&+ \nu\int_{s}^t \e^{A(t-u)}  \left(\bs{1}_N\bs{\omega} \bs{\xi}_u + G_0(s,u)\right) \du\bigg],
    \end{align}
    where $\bs\xi_u$ solves the differential equation
            \begin{equation}
        \frac{\d \bs{\xi}_u}{\d u}  = A\bs{\xi}_u - \lambda G_0(s,u) \boldsymbol{1}_N,
    \end{equation} 
    with $\xi_s=0$, and $E \in \R^{N\times N}$ a matrix given by
    \begin{align*}
    E_{i,j} &= u_iv_j \frac{\e^{(\lambda_j - \lambda_i)(t-s)} -1}{(\lambda_j - \lambda_i)},\quad\quad i,j \leq N \text{ if } i\neq j, \\
    E_{i,i} &= \e^{\lambda_{i} (t-s)}(t-s),\quad\quad\quad\quad i \leq N,
\end{align*}
where $\bs{u},\bs{v} \in R^N$ are such that $ V (\bs{1}_N\bs{\omega})V^{-1} = \bs{u} \bs{v}^T$, and $\lambda_1,\lambda_2,\dots,\lambda_N$ are the distinct eigenvalues of $A$.
    \begin{proof}
        For the covariance, we will again utilize the integral representation of the state processes. We derive the covariance $\E_s[X^n_{s,t}Z_{s,t}]$ for any $n$. 
        \begin{align*}
            \E_s[X^n_{s,t}Z_{s,t}] &=
            \int_{s}^t\E_s[U^n_s Z_{s,t}]\du - \int_{s}^t x_n \E_s\left[\int_{s}^u U_l^n\d l Z_{s,t}\right]\d u \\ 
            &- \int_{s}^t \lambda \E_s\left[\int_{s}^u V_l\d l Z_{s,t}\right]\d u + \nu\int_s^t\E_s[Z_{s,u}Z_{s,t}] \du\\
            &= \E_s[U^n_s Z_{s,t}] (t-s) - \int_{s}^t x_n \E_s[X^n_u Z_{s,t}]\d u - \int_{s}^t \lambda \E_s[X_{s,u} Z_{s,t}]\d u + \nu \int_s^t\E_s[Z_{s,u}Z_{s,t}] \du.
        \end{align*}
        Since $Z_{s,t}$ is a martingale and independent of $\F_{s}$, the first term is equal to zero. Similarly, we can split each remaining integral into two by writing $Z_{s,t} = Z_{s,u} + Z_{u,t}$. Now the term $Z_{u,t}$ is centered and independent of $U_u^n$ and $Z_{s,u}$, meaning that we obtain
        \begin{align}
        \label{eq:intformxz}
             \nonumber \E_s[X^n_{s,t}Z_{s,t}] = -\int_{s}^t x_n \E_s[X^n_{s,u} Z_{s,u}]\d u - \int_{s}^t \lambda \E_s[X_{s,u} Z_{s,u}]\d u + \nu\int_s^t\E_s[X_{s,u}] \du\\
            = -\int_{s}^t x_n \E_s[X^n_{s,u} Z_{s,u}]\d u - \int_{s}^t \lambda \sum_{m=1}\omega_m\E_s[X^m_{s,u} Z_{s,u}]\d u + \nu\int_s^t \E_s[X_{s,u}] \du,
        \end{align}
        where we also used that $\E_s[Z^2_{s,u}] = \E_s[X_{s,u}]$. This is again a system of differential equations in integral form. Let $\kappa^n_t = \E_s[X^n_{s,t}Z_{s,t}]$ and $\bs{\kappa}_t = ( \kappa^1_t , \kappa^2_t, \dots , \kappa^N_t)^T$, and let $\eta_t, \xi_t$ be as in the notation of \Cref{lem:exp}, such that $\bs\eta_t + \bs\xi_t = \bs\mu_t$, with $\mu_t^n = \E_s[X^n_{s,t}]$. We can write the entire system after multiplication by $\bs{\omega}$ as the one-dimensional system
\begin{align*}
    \E_s[X_{s,t}Z_{s,t}] &= \bs{\omega}\bs\kappa_t \\
    &= \sum_{n=1}^N \omega_n \left[-\int_{s}^t x_n \kappa^n_u\d u - \int_{s}^t \lambda \sum_{m=1}\omega_m\kappa^m_u\d u + \nu\int_s^t \sum_{m=1}\omega_m \mu_u^n  + G_0(s,u)\du\right]\\
    &= \sum_{n=1}^N \omega_n \left[-\int_{s}^t x_n \kappa^n_u\d u - \int_{s}^t \lambda \sum_{m=1}\omega_m\kappa^m_u\d u + \nu\int_s^t \sum_{m=1}\omega_m (\xi_u^m  + \eta_u^m) + G_0(s,u)\du\right]\\
    &= \bs{\omega} \int_{s}^t A \bs \kappa_u + \nu\bs{1}_N \left[\bs{\omega}\bs \xi_u + \bs{\omega}\bs \eta_u + G_0(s,u)\right] \du.
\end{align*}
The system is again split into two separate equations $\bs\psi_t, \bs\chi_t$ such that $\bs\kappa_t =  \bs\chi_t + \bs\psi_t$. The term $\bs\psi_t$ contains the deterministic part $\bs{\xi}_u$ and $\bs\chi_t$ contains the stochastic term $\bs{\eta}_u$. We begin with the differential equations for $\bs\chi_t$:
\[\frac{\d}{\d t}\bs\chi_t = A\bs\chi_t + \nu\bs{1}_N \bs{\omega} \bs{\eta}_t.\]
Since $\bs{\eta}_t = A^{-1}\left[\e^{A(t-s)} - I_N\right]\bs{U}_s $ and the system is again inhomogeneous, we write
\begin{align*}
\bs\chi_t &= \nu\int_{s}^t \e^{A (t-u)}\bs{1}_N\bs{\omega}\left[\e^{A(u-s)} - I_N\right]A^{-1}\bs{U}_s \du,\\
&= \nu\left[\int_{s}^t \e^{A (t-u)}\bs{1}_N\bs{\omega}\e^{A(u-s)} \du - \int_{s}^t \e^{A (t-u)}\bs{1}_N \bs{\omega}\du\right]A^{-1}\bs{U}_s ,
\end{align*}
since $A^{-1}$ commutes with $A$ and therefore with its exponential power series. While the second integral is easily analytically solvable, the first integral is more difficult. The matrix $\bs{1}_N\bs{\omega}$ does not commute with $A$ or its power series, and we cannot simply move it outside of the expression. We use the relation of $A = \lambda\bs{1}_N\bs{\omega} - \text{diag}(X)$ to solve the integral, which we derive in \Cref{lem:helper}. The second integral is equal to $A^{-1}(\e^{A(t-s)} -I)\bs{1}_N\bs{\omega} $. We conclude that
\[\bs\chi_t = \nu\left[ VEV^{-1} - A^{-1}(\e^{A(t-s)} -I)\bs{1}_N\bs{\omega} \right] A^{-1} \bs{U}_s.\]

 The second equation $\bs{\psi}_t$ is deterministic and solves the differential equation
\[\frac{\d}{\d t}\bs{\psi}_t = A \bs\psi_t + \nu (\bs{1}_N\bs{\omega} \bs{\xi}_t + G_0(s,t)),\]
which has a solution
\[\bs\psi_t = \nu\int_{s}^t \e^{A(t-u)}  \left(\bs{1}_N\bs{\omega} \bs{\xi}_u + G_0(s,u)\right) \du.\]
We conclude that since $\bs\kappa_t = \bs\psi_t + \bs \chi_t $, we have therefore that
\begin{align*}
    \E_s[X_{s,t}Z_{s,t}] &= \bs{\omega} \kappa_t = \bs{\omega} (\bs\chi_t + \bs\psi_t) \\
    &=\bs{\omega}\bigg[ \nu\left[ VEV^{-1} - A^{-1}(\e^{A(t-s)} -I)\bs{1}_N\bs{\omega} \right] A^{-1} \bs{U}_s + \nu\int_{s}^t \e^{A(t-u)}  \left(\bs{1}_N\bs{\omega} \bs{\xi}_u + G_0(s,u)\right) \du\bigg].
\end{align*}
    \end{proof}
    \end{theorem}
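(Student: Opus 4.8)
The plan is to follow the same strategy as in the proof of \Cref{lem:exp}: reduce the covariance to contributions from the individual state processes, set up an $N$-dimensional linear integral equation for the vector $\bs{\kappa}_t := (\E_s[X^1_{s,t}Z_{s,t}],\dots,\E_s[X^N_{s,t}Z_{s,t}])^T$, and solve it by variation of constants, splitting off the $\bs{U}_s$-dependent part so it can be treated analytically. Since $V_t = g_0(t) + \bs{\omega}\cdot\bs{U}_t$ with $g_0$ deterministic and $Z_{s,t}$ centered given $\F_s$, we immediately get $\E_s[X_{s,t}Z_{s,t}] = \sum_{n=1}^N \omega_n\E_s[X^n_{s,t}Z_{s,t}] = \bs{\omega}\bs{\kappa}_t$, so everything reduces to computing $\bs{\kappa}_t$; the individual components $\kappa^n_t$ are then obtained for free.

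First I would integrate \Cref{eq:lifted_heston_end} on $[s,u]$ to obtain $U^n_u = U^n_s - x_n X^n_{s,u} - \lambda X_{s,u} + \nu Z_{s,u}$, integrate once more over $[s,t]$, multiply by $Z_{s,t}$, and take $\E_s[\cdot]$. Three reductions drive the computation: $\E_s[U^n_s Z_{s,t}] = 0$ because $U^n_s$ is $\F_s$-measurable and $Z$ is a martingale; for each fixed $u\in[s,t]$ the splitting $Z_{s,t} = Z_{s,u} + Z_{u,t}$ with the tower property annihilates every cross term containing the future increment $Z_{u,t}$ (centered given $\F_u$), so that $\E_s[X^n_{s,u}Z_{s,t}] = \E_s[X^n_{s,u}Z_{s,u}]$ and likewise for $X_{s,u}$; and $\E_s[Z_{s,u}Z_{s,t}] = \E_s[Z_{s,u}^2] = \E_s[X_{s,u}]$ by It\^o's isometry together with $\qv{Z}_{s,u} = X_{s,u}$. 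Collecting terms and using $\E_s[X_{s,u}Z_{s,u}] = \sum_m \omega_m\kappa^m_u$ yields $\kappa^n_t = -\int_s^t x_n\kappa^n_u\,\du - \lambda\int_s^t\sum_m\omega_m\kappa^m_u\,\du + \nu\int_s^t \E_s[X_{s,u}]\,\du$, that is, the linear ODE system $\frac{\d}{\dt}\bs{\kappa}_t = A\bs{\kappa}_t + \nu\bs{1}_N\big(\bs{\omega}\bs{\mu}_t + G_0(s,t)\big)$ with $\bs{\kappa}_s = \bs{0}_N$, where $\bs{\mu}_t = \bs{\eta}_t + \bs{\xi}_t$ and $\E_s[X_{s,u}] = \bs{\omega}\bs{\mu}_u + G_0(s,u)$ are exactly the objects produced in \Cref{lem:exp}.

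Solving by variation of constants gives $\bs{\kappa}_t = \nu\int_s^t\e^{A(t-u)}\bs{1}_N\big(\bs{\omega}\bs{\mu}_u + G_0(s,u)\big)\,\du$, and I would split $\bs{\kappa}_t = \bs{\chi}_t + \bs{\psi}_t$ along $\bs{\mu} = \bs{\eta} + \bs{\xi}$, so that $\bs{\psi}_t = \nu\int_s^t\e^{A(t-u)}\big(\bs{1}_N\bs{\omega}\bs{\xi}_u + G_0(s,u)\big)\,\du$ is deterministic while $\bs{\chi}_t = \nu\int_s^t\e^{A(t-u)}\bs{1}_N\bs{\omega}\bs{\eta}_u\,\du$ carries the randomness through $\bs{U}_s$. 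Substituting $\bs{\eta}_u = A^{-1}(\e^{A(u-s)} - I_N)\bs{U}_s$ and using that $A^{-1}$ commutes with the exponential of $A$ reduces $\bs{\chi}_t$ to $\nu\big[\int_s^t\e^{A(t-u)}\bs{1}_N\bs{\omega}\e^{A(u-s)}\,\du - A^{-1}(\e^{A(t-s)} - I_N)\bs{1}_N\bs{\omega}\big]A^{-1}\bs{U}_s$. The main obstacle is the first integral: the rank-one matrix $\bs{1}_N\bs{\omega}$ does not commute with $A$ or $\e^{A\cdot}$, so it cannot simply be pulled out of the integral; this is precisely what \Cref{lem:helper} resolves, evaluating $\int_s^t\e^{A(t-u)}\bs{1}_N\bs{\omega}\e^{A(u-s)}\,\du = VEV^{-1}$ by diagonalizing $A$ and exploiting the rank-one structure in its eigenbasis. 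Granting that, $\bs{\chi}_t = \nu\big[VEV^{-1} - A^{-1}(\e^{A(t-s)} - I_N)\bs{1}_N\bs{\omega}\big]A^{-1}\bs{U}_s$, and $\E_s[X_{s,t}Z_{s,t}] = \bs{\omega}(\bs{\chi}_t + \bs{\psi}_t)$ is the stated formula.
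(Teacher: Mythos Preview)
Your proposal is correct and follows essentially the same route as the paper's proof: derive the integral equation for $\kappa^n_t=\E_s[X^n_{s,t}Z_{s,t}]$ via the three reductions (martingale property of $Z$, the splitting $Z_{s,t}=Z_{s,u}+Z_{u,t}$, and It\^o's isometry), recognize the resulting linear ODE $\frac{\d}{\dt}\bs{\kappa}_t=A\bs{\kappa}_t+\nu\bs{1}_N\big(\bs{\omega}\bs{\mu}_t+G_0(s,t)\big)$, split $\bs{\kappa}_t=\bs{\chi}_t+\bs{\psi}_t$ along $\bs{\mu}=\bs{\eta}+\bs{\xi}$, and evaluate the non-commuting integral in $\bs{\chi}_t$ via \Cref{lem:helper}. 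The only cosmetic difference is that you write the ODE directly, whereas the paper first displays the expanded sum before casting it in matrix form.
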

\subsection{A Linear Projection Algorithm}
\label{sec:proj}
We combine the theoretical results of \Cref{lem:exp} and \Cref{lem:cov} and the insights from \Cref{sec:explicit_sol}, in particular \Cref{eq:main_variance}, to derive the (constrained) linear projection (C-LP) algorithm to simulate paths of the lifted Heston model. We will first lay the basis of the algorithm, which is purely based on the linear projection. In a second step, we then introduce a constraint of the algorithm to ensure that it is well-defined at all times.

Suppose we define a time discretization $t_0 < t_1 < t_2 \dots < t_M$ and are currently at a time $t_i$ with $0 \leq i \leq M$. Given the current state of $\bs{U}_{t_i}$, we compute the (conditional) linear projection of $Z_{t_i,t_{i+1}}$ onto the linear subspace of $X_{t_i,t_{i+1}}$. We define: 
\begin{equation}
    \alpha_i = \E_{t_i}[X_{t_i,t_{i+1}}], \quad \beta_i = \frac{1}{\alpha_i}{\E_{t_i}[X_{t_i,t_{i+1}}Z_{t_i,t_{i+1}}]},
\end{equation}
using the theorems from the previous section. Note that $\alpha_i$ and $\beta_i$ are conditional expectations, and hence path-wise distinct. The linear approximation defines the relation $\hat X_{t_i,t_{i+1}} = \alpha_i + \beta_i\hat Z_{t_i,t_{i+1}}$, where $\hat X_{t_i,t_{i+1}}$ then follows an inverse Gaussian distribution $\hat X_{t_i,t_{i+1}} \sim IG(\mu,\gamma)$ with coefficients $\mu = \alpha$, $\gamma = \frac{\alpha^2}{\beta^2}$. We sample (path-wise) from such a distribution to obtain samples of $\hat X_{t_i,t_{i+1}}$, and samples for $\hat{Z}_{t_i,t_{i+1}}$ are then calculated as $\hat{Z}_{t_i,t_{i+1}} = \frac{\hat X_{t_i,t_{i+1}} - \alpha_i}{\beta_i}$. To update the state processes $\bs{U}_{t_i}$ to $\bs{U}_{t_{t+1}}$, we also require $\hat{X}^n_{t_i,t_{i+1}}$ for all $n\leq N$. Theorems \ref{lem:exp} and \ref{lem:cov} also provide linear projection coefficients for $\E_{t_i}[X^n_{t_i,t_{i+1}}]$ and $\E_{t_i}[X^n_{t_i,t_{i+1}}Z_{t_i,t_{i+1}} ]$, from which we obtain
\begin{equation}
    \alpha^n_i = \E_{t_i}[X^n_{t_i,t_{i+1}}], \quad \beta^n_i = \frac{1}{\alpha_i}{\E_{t_i}[X^n_{t_i,t_{i+1}}Z_{t_i,t_{i+1}}]}, \quad n \leq N.
\end{equation}
This means that we obtain the samples \begin{equation}
    \hat{X}^n_{t_i,t_{i+1}} = \alpha^n_i + \beta^n_i \hat{Z}_{t_i,t_{i+1}}.\end{equation}
    The samples for $\hat{X}^{n}_{t_i,t_{i+1}}$ are consistent with $\hat{X}_{t_i,t_{i+1}}$, as we have that 
\begin{equation}
    \hat{X}_{t_i,t_{i+1}} = \sum_{n=1}^N \omega_n \hat{X}^{n}_{t_i,t_{i+1}} + \int_{t_i}^{t_{i+1}}g_0(u)\du.
\end{equation}
We can now update the state processes as
\begin{equation}
\label{eq:update_u}
    \hat{U}^n_{t_{i+1}} = \hat{U}^n_{t_{i}} - x_n \hat X^n_{t_i,t_{i+1}} - \lambda \hat X_{t_i,t_{i+1}} + \nu \hat Z_{t_i,t_{i+1}},
\end{equation}
and derive $\hat{V}_{i+1}$ for the new states from \Cref{eq:main_variance} (or directly as the weighted sum of $U_{t_{i+1}}^n$), as well as the new values for the asset price from \Cref{eq:main_process}.

\subsection{Constraining the Slope}
While the linear projection method yields the $ L^2$-optimal relation between the variables, it is nevertheless an approximation of a nonlinear relation. For this reason we have to ensure that a potential approximation scheme $\left\{(\hat{X}_{t_i,t_{i+1}},\hat{Z}_{t_i,t_{i+1}}) :  0 \leq  i \leq M-1\right\}$ of the process is well-defined at all times. While deriving the coefficients $\alpha_i,\beta_i$ is not problematic, we have to ensure that the following conditions are fulfilled: 
$$\alpha_i >0  \text{ and } \beta_i \neq 0, \quad 0 \leq  i \leq M-1.$$ The conditions ensure that the inverse Gaussian distribution is well-defined at all time steps, and that $\hat Z_{t_i,t_{i+1}}$ can be computed. To ensure these properties, we will systematically restrict the parameter choice for $\beta_i$ on a \emph{feasibility domain}, where it is ensured that $\beta_i \neq 0$ and the subsequent step leads to a parameter $\alpha_{i+1} > 0$. Since $\alpha_{i+1}$ is the expectation of $\int_{t_{i+1}}^{t_{i+2}} V_u \du$, conditioned on $\hat{V}_{t_{i+1}}$, the restriction is obtained by ensuring that $\hat{V}_{t_{i+1}} = \bs{\omega} \hat{\bs {U}}_{t+1} + g_0(t_{i+1})$ is positive at all times. To ensure that this quantity remains positive, we define a set of constraints with the current state for $\hat{\bs {U}}_{t}$ for the parameter choice of $\beta_i$, and the updated values. We express the value for the updated variance using (\ref{eq:update_u}), i.e. $\hat{V}_{t_{i+1}} = \bs\omega \bs {\hat U}_{t_{i+1}}+ g_0(t_{i+1})$, as a function of the value of $x=\hat{X}_{t_i,t_{i+1}}$, and the coefficient $\beta$:
\begin{equation}
\label{eq:constraint}
    C_i(x, \beta)  =  \bs{\omega} \hat{\bs {U}}_{t_i} + \bs{\omega} (-\text{diag}({\bs{x}})\hat{\bs{X}}_{t_i,t_{i+1}}(x) - \lambda x + \nu \frac{x-\alpha_i}{\beta}  )+ g_0(t_{i+1}),
\end{equation}
with $\hat{\bs{X}}_{t_i,t_{i+1}}(x) = \left(\hat{X}^1_{t_i,t_{i+1}}(x),\hat{X}^2_{t_i,t_{i+1}}(x),\dots,\hat{X}^N_{t_i,t_{i+1}}(x)\right)^T$ and $    \hat{X}^n_{t_i,t_{i+1}}(x) = \alpha^n_i 
        + \frac{ \beta^n_i (x - \alpha_i)}{\beta_i}$.
Note that the projection for $\hat{X}^n_{t_i,t_{i+1}}$ remains unchanged, since the state processes $\hat{U}^n_{t_{i+1}}$ can take negative values while still leading to positive $\hat{V}_{t_{i+1}}$, and hence require no constraining. The function $C(x, \beta)$ defines a set of open constraints $\left\{C_i(x, \beta) > 0 : x\in (0,\infty) \right\}$. Since the set contains uncountably many constraints, we aim to replace the set with an equivalent set of constraints that is much smaller. It turns out that the set can be reduced to a single closed constraint $C_i(0,\beta) \geq 0$ if we restrict $\beta$ to a smaller domain.
\begin{lem}
    Let $\left\{C_i(x, \beta) > 0 : x \in [0,\infty)\right\}$ be a collection of constraints given by \Cref{eq:constraint}. Denote the scalar $\bar{\omega} := \bs \omega \bs{1}_N  =\sum_{n=1}^N \omega_n$. The coefficient $\beta \in (0,\infty)$ satisfies 
    \begin{equation}
        \beta \leq \beta^{L_i}:=\frac{\nu\bar{\omega}}{\sum_{n=1}^N \omega_n \left(x_n\frac{\E_{t_i}[X^n_{t_i,t_{i+1}}Z_{t_i,t_{i+1}}]}{\E_{t_i}[X_{t_i,t_{i+1}}Z_{t_i,t_{i+1}}]} + \lambda \right)},
    \end{equation}
    and $C_i(0, \beta) \geq 0$ if and only if all constraints in $\left\{C_i(x, \beta) > 0 : x \in [0,\infty)\right\}$ are satisfied. Furthermore, the linear projection estimate for $\beta_i$ is smaller than $\beta^{L_i}$.
\end{lem}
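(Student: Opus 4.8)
\emph{Proof plan.}\quad The plan is to exploit that, for every fixed $\beta>0$, the map $x\mapsto C_i(x,\beta)$ is \emph{affine} on $[0,\infty)$: in \eqref{eq:constraint} the dependence on $x$ enters only through $x$ itself and through $\hat X^n_{t_i,t_{i+1}}(x)=\alpha^n_i+\beta^n_i(x-\alpha_i)/\beta_i$, both affine. An affine map on a half‑line is nonnegative there iff it is nonnegative at $x=0$ and has nonnegative slope, so everything reduces to controlling the slope. First I would read off the coefficient of $x$ in $C_i(x,\beta)$: using $\beta^n_i/\beta_i=\E_{t_i}[X^n_{t_i,t_{i+1}}Z_{t_i,t_{i+1}}]/\E_{t_i}[X_{t_i,t_{i+1}}Z_{t_i,t_{i+1}}]$ (the $\alpha_i$ normalisations cancel) and $\bs\omega\,\text{diag}(\bs x)\hat{\bs X}_{t_i,t_{i+1}}(x)=\sum_n\omega_n x_n\hat X^n_{t_i,t_{i+1}}(x)$, it collapses to
\[
\partial_x C_i(x,\beta)=\frac{\nu\bar\omega}{\beta}-\sum_{n=1}^N\omega_n\Big(x_n\frac{\E_{t_i}[X^n_{t_i,t_{i+1}}Z_{t_i,t_{i+1}}]}{\E_{t_i}[X_{t_i,t_{i+1}}Z_{t_i,t_{i+1}}]}+\lambda\Big)=\nu\bar\omega\Big(\frac1\beta-\frac1{\beta^{L_i}}\Big),
\]
the last equality being exactly the definition of $\beta^{L_i}$. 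Hence, for $\beta\in(0,\infty)$, the slope is $\ge0$ precisely when $\beta\le\beta^{L_i}$.

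The equivalence then follows at once. If $\beta\le\beta^{L_i}$, then $C_i(x,\beta)=C_i(0,\beta)+x\,\partial_x C_i(\cdot,\beta)\ge C_i(0,\beta)$ for all $x\ge0$, so the uncountable family $\{C_i(x,\beta)>0:x\in[0,\infty)\}$ is equivalent to the single scalar condition $C_i(0,\beta)\ge0$ (strict positivity holding for $x>0$ whenever $\beta<\beta^{L_i}$, and the boundary point $x=0$ being irrelevant since $\{\hat X_{t_i,t_{i+1}}=0\}$ is null under the inverse Gaussian law). Conversely, if $\beta>\beta^{L_i}$ the slope is strictly negative, $C_i(x,\beta)\to-\infty$ as $x\to\infty$, and the family fails outright — this is what pins down $\beta\le\beta^{L_i}$ in the reverse direction, $C_i(0,\beta)>0$ being then immediate from $0\in[0,\infty)$.

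For the last assertion, the slope evaluated at $\beta=\beta_i$ equals $\nu\bar\omega(1/\beta_i-1/\beta^{L_i})$, so $\beta_i\le\beta^{L_i}$ is equivalent to $\partial_x C_i(\cdot,\beta_i)\ge0$. Clearing the positive factor $\beta_i\,\E_{t_i}[X_{t_i,t_{i+1}}]$ and substituting $\beta_i=\E_{t_i}[X_{t_i,t_{i+1}}Z_{t_i,t_{i+1}}]/\E_{t_i}[X_{t_i,t_{i+1}}]$, $\beta^n_i=\E_{t_i}[X^n_{t_i,t_{i+1}}Z_{t_i,t_{i+1}}]/\E_{t_i}[X_{t_i,t_{i+1}}]$, this is equivalent to $\nu\bar\omega\,\E_{t_i}[X_{t_i,t_{i+1}}]\ge\lambda\bar\omega\,\E_{t_i}[X_{t_i,t_{i+1}}Z_{t_i,t_{i+1}}]+\sum_n\omega_n x_n\,\E_{t_i}[X^n_{t_i,t_{i+1}}Z_{t_i,t_{i+1}}]$. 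Multiplying the aggregated variance dynamics \eqref{eq:main_variance} (with $s=t_i$, $t=t_{i+1}$) by $Z_{t_i,t_{i+1}}$, taking $\E_{t_i}[\cdot]$, and using $\E_{t_i}[Z_{t_i,t_{i+1}}]=0$ together with It\^o's isometry $\E_{t_i}[Z_{t_i,t_{i+1}}^2]=\E_{t_i}[X_{t_i,t_{i+1}}]$ shows the difference of the two sides is exactly $\E_{t_i}[V_{t_{i+1}}Z_{t_i,t_{i+1}}]$. So $\beta_i\le\beta^{L_i}$ is equivalent to the nonnegative‑correlation property $\E_{t_i}[V_{t_{i+1}}Z_{t_i,t_{i+1}}]\ge0$ (equivalently, $t\mapsto\E_{t_i}[X_{t_i,t}Z_{t_i,t}]$ nondecreasing).

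This nonnegative correlation is the main obstacle. I would obtain it from the representation $\E_s[V_tZ_{s,t}]=\nu\int_s^t\phi(t-u)\,\E_s[V_u]\,\du$ with $\phi(\tau)=\bs\omega\,\e^{A\tau}\bs 1_N$, derived by solving the linear ODE for $\bs f(t):=(\E_s[U^n_tZ_{s,t}])_{n\le N}$ — which has driver matrix $A$ of \Cref{lem:exp} and forcing $\nu\E_s[V_t]\bs 1_N$, the stochastic‑integral cross term being handled by It\^o's isometry. Here $\phi$ satisfies the scalar resolvent equation $\phi+\lambda K\ast\phi=K$ with the completely monotone kernel $K(\tau)=\sum_n\omega_n\e^{-x_n\tau}\ge0$; from $\phi=K-\lambda K\ast\phi$ one reads off $\phi\le K\le K(0)=\bar\omega$ for free, while the lower bound $\phi\ge0$ is the classical positivity of the resolvent of a completely monotone kernel. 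Combined with $V\ge0$ (nonnegativity of the variance, a standing well‑posedness assumption of the model), this yields $0\le\E_s[V_tZ_{s,t}]\le\nu\bar\omega\,\E_s[X_{s,t}]$: the left bound gives $\beta_i\le\beta^{L_i}$, the right bound (strict for $t>s$) gives $\beta^{L_i}>0$, and $\E_s[X_{s,t}Z_{s,t}]=\int_s^t\E_s[V_uZ_{s,u}]\,\du>0$ gives $\beta_i>0$, legitimising all the normalisations used above. Everything else is the elementary affine bookkeeping.
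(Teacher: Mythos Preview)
Your argument is correct. For the first equivalence it coincides with the paper's: both observe that $x\mapsto C_i(x,\beta)$ is affine, compute its slope, and identify $\beta^{L_i}$ as the value at which the slope vanishes. For the inequality $\beta_i\le\beta^{L_i}$ the two proofs also agree on the reduction --- both arrive at the nonnegativity of $\phi(\tau)=\bs\omega\,\e^{A\tau}\bs 1_N$, via the linear ODE for the cross-moments with driver matrix $A$ and forcing $\nu\,\E_{t_i}[V_\cdot]\,\bs 1_N$ (the paper works with $\E_{t_i}[X^n_{t_i,t}Z_{t_i,t}]$, you with $\E_{t_i}[U^n_tZ_{t_i,t}]$; these differ only by a $t$-derivative). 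The genuine divergence is in how $\phi\ge0$ is established. The paper attempts a direct factorisation $\e^{A\tau}=\e^{-\lambda\bs 1_N\bs\omega}\e^{-\mathrm{diag}(\bs x)\tau}$, then evaluates $\bs\omega$ against each factor. You instead derive the scalar Volterra identity $\phi+\lambda K\ast\phi=K$ with $K(\tau)=\sum_n\omega_n\e^{-x_n\tau}$ and invoke the classical fact that the resolvent of a completely monotone kernel is itself completely monotone, hence nonnegative. Your route is more robust: it does not rely on any commutation of $\bs 1_N\bs\omega$ and $\mathrm{diag}(\bs x)$ (which in fact fails whenever the $x_n$ are not all equal), and it extends verbatim to any nonnegative measure replacing $\sum_n\omega_n\delta_{x_n}$. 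The price is that you cite an external positivity result rather than computing everything in closed form; a one-line Laplace-transform justification ($\hat\phi=\hat K/(1+\lambda\hat K)$ is a Stieltjes function) would make the argument self-contained.
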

\begin{proof}
    We note that $x \mapsto C_i(x, \beta)$ is a linear function for fixed $\beta$. This means that the constraint at $0$ fulfills all constraints for a $\beta$ if and only if the slope is positive, meaning that $\frac{\d C_i(x, \beta) }{\d x} \geq 0$. If the slope is negative, there is an $x > 0$ such that $C(x,\beta) < 0$. The derivative can be found by differentiating \Cref{eq:constraint}:
    \begin{align*}\frac{\d C_i(x, \beta) }{\d x} =& \sum_{n=1}^N \omega_n \left(-x_n\frac{\E_{t_i}[X^n_{t_i,t_{i+1}}Z_{t_i,t_{i+1}}]}{\E_{t_i}[X_{t_i,t_{i+1}}Z_{t_i,t_{i+1}}]} - \lambda + \frac{\nu }{\beta}\right).
    \end{align*}
    Since $\nu > 0$, the derivative is non-negative for all $\beta \in (0,\beta^{L_i}]$ if and only if it is the case for $\beta = \beta^{L_i}$. The upper boundary $\beta^{L_i}$ is chosen such that the derivative is $0$. 
    
    To prove that $\beta_i \leq \beta^{L_i}$, we show that the derivative is positive at $\beta_i$. We have 
    \begin{align*}
    \frac{\d C_i(x, \beta_i) }{\d x}=&\sum_{n=1}^N \omega_n \left(-x_n\frac{\E_{t_i}[X^n_{t_i,t_{i+1}}Z_{t_i,t_{i+1}}]}{\E_{t_i}[X_{t_i,t_{i+1}}Z_{t_i,t_{i+1}}]} - \lambda + \frac{\nu \E_{t_i}[X_{t_i,t_{i+1}}] }{\E_{t_i}[X_{t_i,t_{i+1}}Z_{t_i,t_{i+1}}]}\right)\\
    =&\frac{1}{\E_{t_i}[X_{t_i,t_{i+1}}Z_{t_i,t_{i+1}}]}\sum_{n=1}^N \omega_n  \left(-x_n\E_{t_i}[X^n_{t_i,t_{i+1}}Z_{t_i,t_{i+1}}] - \lambda \E_{t_i}[X_{t_i,t_{i+1}}Z_{t_i,t_{i+1}}] + \nu \E_{t_i}[X_{t_i,t_{i+1}}]\right).
    \end{align*}
    The last expression is similar to the calculations of \Cref{lem:cov}. We can show that 
    \begin{align*}
        \frac{\d}{\dt}\E_{t_i}[X^n_{t_{i}, t}Z_{t_{i}, t}] \bigg|_{t=t_{i+1}} = -x_n\E_{t_i}[X^n_{t_i,t_{i+1}}Z_{t_i,t_{i+1}}] - \lambda \E_{t_i}[X_{t_i,t_{i+1}}Z_{t_i,t_{i+1}}] + \nu \E_{t_i}[X_{t_i,t_{i+1}}],
    \end{align*}
    by differentiation the \Cref{eq:intformxz}. We will hence show that $$\E_{t_i}[X_{t_i,t_{i+1}}Z_{t_i,t_{i+1}}], \quad \text{ and }\quad \frac{\d}{\dt}\E_{t_i}[X_{t_{i}, t}Z_{t_{i}, t}] \bigg|_{t=t_{i+1}} = \frac{\d}{\dt}\sum_{n=1}^N \omega_n \E_{t_i}[X^n_{t_{i}, t}Z_{t_{i}, t}] \bigg|_{t=t_{i+1}},$$
    are both positive, in which case the slope is indeed positive. We prove this from the differential equation of $\E_{t_i}[X_{t_i,t_{i+1}}Z_{t_i,t_{i+1}}]$. Again, from the proof of \Cref{lem:cov}, we know that

    \[\frac{\d}{\dt} \bs \kappa_t = A \bs \kappa_t + \nu \bs{1}_N \mu_t ,\]
    where $\kappa^n_t = \E_{t_i}[X^n_{t_i,t}Z_{t_i,t}]$ and $\mu_t = \E_{t_i}[X_{t_i,t}]$ for $t \geq t_{i}$. The system has a solution
    \[\bs \kappa_t = \nu\int_{t_{i}}^{t} \e ^{A(t - u)}\bs{1}_N \mu_u \d u.\] To obtain $\E_{t_i}[X_{t_i,t}Z_{t_i,t}]$ we multiply with $\bs \omega$:
    \[\E_{t_i}[X_{t_i,t}Z_{t_i,t}] =\nu \bs\omega\int_{t_{i}}^{t} \e ^{A(t - u)}\bs{1}_N \mu_u \d u = \nu \int_{t_{i}}^{t} \bs\omega \e ^{A(t - u)}\bs{1}_N \mu_u \d u.\]
    We will now prove that the integrant is non-negative for all $u \in [t_i,t] $. This implies that $\E_{t_i}[X_{t_i,t}Z_{t_i,t}]$ is non-negative and increasing, hence has a non-negative derivative. Since $\mu_t$ is non-negative, as $V_t$ is almost surely non-negative, it suffices to show that $\bs\omega \e ^{A(t - u)}\bs{1}_N \geq 0$ for all $u \in [t_i,t].$

    The exponential of the matrix $A = -\lambda \bs{1}_N\bs\omega - \text{diag}(x)$ can be written as \[\e ^{A (t-u)} = \e^{-\lambda \bs{1}_N\bs\omega}\e^{- \text{diag}(x) (t-u)}, \] since the terms commute. The term $\e^{- \text{diag}(x) (t-u)}$ is the exponential of a diagonal matrix, and hence has only positive entries on the diagonals. Multiplying with $\bs{1}_N$ then yields
    \[\e^{- \text{diag}(x) (t-u)}\bs{1}_N = 
        (\e^{-x_1(t-u)},
        \e^{-x_2(t-u)},
        \dots,
        \e^{-x_N(t-u)}
    )^T, \quad u \in [t_i,t],\] which is a vector of only non-negative entries.
    
    We now consider the other term $\bs{\omega}\e^{-\lambda \bs{1}_N \bs\omega}$. We calculate the $m$ power $(\bs{1}_N\bs\omega)^m$ for any integer $m\geq0$. Then, $(\bs{1}_N\bs\omega)^m = \bs{1}_N\bar{\omega}^{m-1}\bs\omega$. This means that we can write the exponential as
    \[\bs{\omega}\e^{-\lambda \bs{1}_N \bs\omega} = \bs\omega \sum_{m=0}^{\infty} \frac{(-\lambda)^m(\bs{1}_N\bs\omega)^m}{m!} = \bs\omega\sum_{m=0}^{\infty} \frac{(-\lambda)^m\bs{1}_N\bar{\omega}^{m-1}\bs\omega}{m!} 
    =\sum_{m=0}^{\infty} \frac{(-\lambda)^m\bar{\omega}^{m}\bs\omega}{m!}, \]
    which is exactly equal to $\e^{-\lambda \bar{\omega}}\bs\omega$. This is again a (row) vector with only positive entries, since $\omega_n \geq 0, n \leq N$. The inner product of the two non-negative vectors is therefore a non-negative scalar, which concludes the proof that $\bs\omega \e ^{A(t - u)}\bs{1}_N \mu_u$ is non-negative for all $u \in [t_i,t]$, and therefore proves the claim.
\end{proof}
The conclusion from the lemma is that it suffices to consider a single closed constraint to ensure that $\hat{V}_{t_{i+1}}$ stays positive. This constraint helps us define a definition of feasibility.
\begin{dfn}
    Let $\bs U_{t_i}$ be a given state of the lifted Heston model at time $t_i$, and $t_{i+1} > t_i$ a second time for a time step $t_{i+1} - t_i$. We call the coefficient $\beta$ \emph{feasible} if $\beta \in (0,\beta^{L_i}]$ and $ C_i(0, \beta) \geq 0$.
\end{dfn}
With the definition of feasibility, we can define the constrained algorithm by obtaining the parameters $\alpha_i,\beta^C_i$ as the $L^2$ minimizer, under the constraint $C_i(0,\beta)$. The coefficient $\beta^C_i$ is then found as
\begin{equation}
    \beta^C_i \;:=\; \arginf_{\beta \in (0,\beta^{L_i}] } 
        \quad  \E_{t_i}\!\left[\left(X_{t_i,t_{i+1}} - (\alpha_i + \beta Z_{t_i,t_{i+1}})\right)^2\right]  \quad \text{s.t.}\quad  C_i(0,\beta) \geq 0.
\end{equation}
The minimization problem is well-defined if the constrained set for $\beta$ is non-empty.
\begin{lem}
    Let $\bs U_{t_i}$ be a given state of the lifted Heston model at time $t_i$, and $t_{i+1} > t_i$ a second time for a time step $t_{i+1} - t_i$. Furthermore, let $\alpha_i = \E_{t_i}[X_{t_i,t_{i+1}}]$ and $\beta_i = \frac{\E_{t_i}[X_{t_i,t_{i+1}}Z_{t_i,t_{i+1}}]}{\alpha_i}$. The set $\mathcal{B}$ given by
    \begin{equation}
        \mathcal{B} = \{ \beta \in (0,\infty):  \beta  \text{ is feasible } \},
    \end{equation}
    is non-empty. 
\end{lem}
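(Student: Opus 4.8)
\emph{Proof sketch.}\quad The plan is to produce an explicit element of $\mathcal B$, the natural candidate being the boundary point $\beta^{L_i}$ itself. Feasibility of a coefficient $\beta$ means $\beta\in(0,\beta^{L_i}]$ and $C_i(0,\beta)\ge 0$, so two things must be verified: that $\beta^{L_i}>0$ (so that the interval $(0,\beta^{L_i}]$ is non-empty and contains $\beta^{L_i}$), and that $C_i(0,\beta^{L_i})\ge 0$. The first is essentially contained in the preceding lemma: there we showed $\beta_i\le\beta^{L_i}$, and $\beta_i=\E_{t_i}[X_{t_i,t_{i+1}}Z_{t_i,t_{i+1}}]/\alpha_i>0$ since $\alpha_i>0$ by hypothesis and $\E_{t_i}[X_{t_i,t}Z_{t_i,t}]=\nu\int_{t_i}^{t}\bs\omega\,\e^{A(t-u)}\bs{1}_N\,\E_{t_i}[X_{t_i,u}]\,\d u$ has a strictly positive integrand on $(t_i,t_{i+1})$ -- the factor $\bs\omega\,\e^{A(t-u)}\bs{1}_N=\bigl(\sum_n\omega_n\e^{-x_n(t-u)}\bigr)\e^{-\lambda\bar\omega(t-u)}$ is positive once $\bar\omega:=\bs\omega\bs{1}_N>0$, and $\E_{t_i}[X_{t_i,u}]>0$ for $u>t_i$ because $V$ is non-negative with $V_{t_i}>0$. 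Hence $\beta^{L_i}>0$ and only the inequality $C_i(0,\beta^{L_i})\ge 0$ remains.

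The key step is to notice that $\beta^{L_i}$ was defined precisely so that the slope $\partial_x C_i(x,\beta^{L_i})$ vanishes; as $x\mapsto C_i(x,\beta)$ is affine for fixed $\beta$, the map $x\mapsto C_i(x,\beta^{L_i})$ is therefore \emph{constant}. In particular $C_i(0,\beta^{L_i})=C_i(\alpha_i,\beta^{L_i})$, and I would evaluate $C_i$ at the convenient ``mean'' point $x=\alpha_i$, at which the induced diffusion value $\nu(x-\alpha_i)/\beta^{L_i}$ is zero and each induced state integral collapses, $\hat X^n_{t_i,t_{i+1}}(\alpha_i)=\alpha^n_i+\beta^n_i(\alpha_i-\alpha_i)/\beta_i=\alpha^n_i$. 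Substituting these into \eqref{eq:constraint} yields
\[
  C_i(0,\beta^{L_i})=C_i(\alpha_i,\beta^{L_i})=\bs\omega\,\bs U_{t_i}-\sum_{n=1}^N\omega_n x_n\alpha^n_i-\lambda\bar\omega\,\alpha_i+g_0(t_{i+1}).
\]

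To close the argument I would identify this right-hand side with the exact conditional mean $\E_{t_i}[V_{t_{i+1}}]$. Taking $\E_{t_i}$ of the exact variance increment behind \eqref{eq:main_variance}, namely $V_{t_{i+1}}=V_{t_i}+g_0(t_{i+1})-g_0(t_i)-\sum_n\omega_n x_n X^n_{t_i,t_{i+1}}-\lambda\bar\omega X_{t_i,t_{i+1}}+\nu\bar\omega Z_{t_i,t_{i+1}}$, and using $\E_{t_i}[Z_{t_i,t_{i+1}}]=0$, $\E_{t_i}[X^n_{t_i,t_{i+1}}]=\alpha^n_i$, $\E_{t_i}[X_{t_i,t_{i+1}}]=\alpha_i$ (Theorems~\ref{lem:exp} and \ref{lem:cov}), together with $\bs\omega\bs U_{t_i}=V_{t_i}-g_0(t_i)$, one recovers exactly the displayed expression, so $C_i(0,\beta^{L_i})=\E_{t_i}[V_{t_{i+1}}]$. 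Since the lifted Heston variance is almost surely non-negative (the same fact invoked in the preceding lemma), $\E_{t_i}[V_{t_{i+1}}]\ge 0$, hence $C_i(0,\beta^{L_i})\ge 0$, the coefficient $\beta^{L_i}$ is feasible, and $\mathcal B$ is non-empty. I expect the only delicate point to be this last identification: one must make sure the first moments $\alpha_i,\alpha^n_i$ baked into the linear projection really are the \emph{true} conditional means (which is exactly what the two theorems provide), so that evaluating $C_i$ at the projection's own mean returns the genuine conditional mean of the variance -- which is non-negative -- rather than merely an approximation of it; the rest is affine bookkeeping.
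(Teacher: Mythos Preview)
Your proof is correct and follows the same strategy as the paper: you exhibit $\beta^{L_i}$ as a feasible element of $\mathcal B$ by showing $C_i(0,\beta^{L_i})=\E_{t_i}[V_{t_{i+1}}]\ge 0$. The one methodological difference is how you reach that identification. The paper substitutes $x=0$ directly into \eqref{eq:constraint}, then uses the explicit form of $\beta^{L_i}$ to force a telescoping cancellation of the $\nu\alpha_i/\beta^{L_i}$ term against the $x_n$-weighted covariance ratios, landing on \eqref{eq:beta_upper}. You instead exploit that $\beta^{L_i}$ was \emph{defined} to make $\partial_x C_i(x,\beta^{L_i})=0$, so $C_i(\cdot,\beta^{L_i})$ is constant and you may evaluate at the convenient point $x=\alpha_i$, where the diffusion term and the state-integral corrections vanish automatically. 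Your route is a little cleaner and makes the role of $\beta^{L_i}$ more transparent; the paper's is a direct computation. Both yield the same final expression and the same appeal to $\E_{t_i}[V_{t_{i+1}}]\ge 0$. Your additional verification that $\beta^{L_i}>0$ (via $0<\beta_i\le\beta^{L_i}$ from the preceding lemma) fills in a point the paper leaves implicit.
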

\begin{proof}
It suffices to show that $\beta^{L_i}$ is feasible, and therefore $C_i(0,\beta^{L_i}) \geq 0$. We find
\begin{align}
\label{eq:beta_upper}
   \nonumber C_i(0,\beta^{L_i}) &= \bs{\omega} \hat{\bs {U}}_{t_i} + \bs{\omega} (-\text{diag}({\bs{x}})\hat{\bs{X}}_{t_i,t_{i+1}}(0) - \nu \frac{\alpha_i}{\beta^{L_i}}  )+ g_0(t_{i+1})\\
    \nonumber&= \sum_{n=1}^N \omega_n \hat{U}^n_{t_i} + \sum_{n=1}^N \omega_n \left[-x_n \left(\alpha^n_i -\frac{\E_{t_i}[X^n_{t_i,t_{i+1}}Z_{t_i,t_{i+1}}]}{\E_{t_i}[X_{t_i,t_{i+1}}Z_{t_i,t_{i+1}}]} \alpha_i \right)\right]\\
    \nonumber&- \nu \frac{\bar{\omega}\alpha_i {\sum_{n=1}^N \omega_n \left(x_n\frac{\E_{t_i}[X^n_{t_i,t_{i+1}}Z_{t_i,t_{i+1}}]}{\E_{t_i}[X_{t_i,t_{i+1}}Z_{t_i,t_{i+1}}]} + \lambda \right)}}{\nu \bar{\omega}}  + g_0(t_{i+1})\\
    &= \sum_{n=1}^N \omega_n \hat{U}^n_{t_i}  + \sum_{n=1}^N \omega_n (-x_n \alpha_i^n + -\lambda \alpha_i) + g_0(t_{i+1}).
\end{align}
The last expression is exactly equal to the expectation $\E_{t_i}[V_{t_{i+1}}]$, conditioned on $\F_{t_i}$, and is therefore positive.
\end{proof}
The proof shows that there exists a suitable $\beta^C_i$ for the optimization. The next result shows how to obtain it analytically
\begin{lem} Let $c :=\bs{\omega} \hat{\bs {U}}_{t_i} - \bs{\omega} (\text{diag}({\bs{x}}) \hat{\bs{X}}_{t_i,t_{i+1}}(0)  )+ g_0(t_{i+1})$
    The optimal slope $\beta^C_i$ is given by
    \begin{equation}
        \beta^C_i =\begin{cases}
            \beta_i \quad &\text{if } {\beta_i} \text{ is feasible},\\
           \frac{\nu \alpha_i \bar{\omega}}{c} \quad  &\text{if } {\beta_i} \text{ is not feasible.}
        \end{cases}
    \end{equation}
\end{lem}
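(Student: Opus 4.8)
The plan is to reduce the constrained optimization to projecting the vertex of a parabola onto an interval. With $\alpha$ held fixed at $\alpha_i$, expanding the square shows that
\[
f(\beta):=\E_{t_i}\!\left[\left(X_{t_i,t_{i+1}}-\alpha_i-\beta Z_{t_i,t_{i+1}}\right)^2\right]
\]
is a strictly convex quadratic in the scalar $\beta$, with leading coefficient $\E_{t_i}[Z_{t_i,t_{i+1}}^2]=\alpha_i>0$; moreover, using $\E_{t_i}[Z_{t_i,t_{i+1}}]=0$ together with It\^o's isometry $\E_{t_i}[Z_{t_i,t_{i+1}}^2]=\E_{t_i}[X_{t_i,t_{i+1}}]$, its unconstrained minimizer is exactly $\beta_i=\E_{t_i}[X_{t_i,t_{i+1}}Z_{t_i,t_{i+1}}]/\alpha_i$, in agreement with \Cref{lem:optimal}. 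Hence $\beta^C_i$ is simply the point of the feasible set closest to $\beta_i$.

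Next I would make the feasible set explicit. Evaluating \Cref{eq:constraint} at $x=0$ (using the already-derived form of $\hat{X}^n_{t_i,t_{i+1}}(0)$) and collecting the $\beta$-independent terms into $c$ gives
\[
C_i(0,\beta)=c-\frac{\nu\,\alpha_i\,\bar\omega}{\beta},
\]
which is strictly increasing in $\beta$ on $(0,\infty)$ because $\nu\alpha_i\bar\omega>0$. From the proof of the preceding lemma, $C_i(0,\beta^{L_i})=\E_{t_i}[V_{t_{i+1}}]>0$, so $c>\nu\alpha_i\bar\omega/\beta^{L_i}>0$; consequently $C_i(0,\beta)\ge0\iff\beta\ge\nu\alpha_i\bar\omega/c$. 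Intersecting with $\beta\in(0,\beta^{L_i}]$ and recalling the a priori bound $\beta_i\le\beta^{L_i}$ from the earlier lemma, the set of feasible slopes is the closed interval $\mathcal B=\bigl[\nu\alpha_i\bar\omega/c,\ \beta^{L_i}\bigr]$, which is non-empty since it contains $\beta^{L_i}$.

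It then remains to project $\beta_i$ onto $\mathcal B$. If $\beta_i$ is feasible, then $\beta_i\in\mathcal B$ and, being the global minimizer of the convex $f$, it is a fortiori the minimizer over $\mathcal B$, so $\beta^C_i=\beta_i$. If $\beta_i$ is not feasible, then since $\beta_i\le\beta^{L_i}$ the only way feasibility can fail is $C_i(0,\beta_i)<0$, i.e.\ $\beta_i<\nu\alpha_i\bar\omega/c$; thus $\beta_i$ lies strictly to the left of $\mathcal B$, and because $f$ is increasing on $[\beta_i,\infty)\supseteq\mathcal B$ its minimum over $\mathcal B$ is attained at the left endpoint, giving $\beta^C_i=\nu\alpha_i\bar\omega/c$. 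This is precisely the claimed case distinction.

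The step I expect to be the main obstacle is the sign bookkeeping in the middle paragraph: recognizing $C_i(0,\cdot)$ as strictly increasing in $\beta$ and showing the threshold $\nu\alpha_i\bar\omega/c$ is finite and strictly positive. This rests on $\alpha_i>0$ (positivity of the conditional integrated variance, \Cref{lem:exp}), on $\nu,\bar\omega>0$, and on the identity $C_i(0,\beta^{L_i})=\E_{t_i}[V_{t_{i+1}}]$ borrowed from the previous lemma, which is exactly what forces $c>0$. Everything after that is the elementary fact that a strictly convex parabola is minimized over an interval at its vertex, or at the nearer endpoint when the vertex lies outside.
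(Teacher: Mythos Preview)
Your argument is correct and tracks the paper's proof closely: both establish $c>0$ via the identity $C_i(0,\beta^{L_i})=\E_{t_i}[V_{t_{i+1}}]$ from the preceding lemma, treat the feasible case trivially, and in the infeasible case locate the minimizer at the active constraint $C_i(0,\beta)=0$. The only difference is presentational: the paper invokes the KKT conditions to place the constrained optimum on the boundary, whereas you write out $C_i(0,\beta)=c-\nu\alpha_i\bar\omega/\beta$, identify the feasible set as the interval $[\nu\alpha_i\bar\omega/c,\beta^{L_i}]$, and project the vertex of the parabola onto it---arguably a cleaner and more self-contained route for a one-dimensional problem.
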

\begin{proof}
    Firstly, note that $c$ is strictly larger than $0$, since it is greater than or equal to $\cref{eq:beta_upper}$. If the unconstrained minimum is feasible, it is the constrained minimum. Hence, $\beta_i = \beta_i^C$ in this case. Since $\alpha_i \geq 0$, the constraint is convex, and we can solve the optimization problem using the Karush-Kuhn-Tucker conditions. The conditions state that the minimum is on the boundary $C_i(0, \beta^C_i) = 0$, if $\beta_i$ is not feasible. Solving for $C_i(0, \beta^C_i) = 0$ for $\beta^C_i$, we obtain the solutions for the cases. Lastly, by shuffling the terms we see that $\frac{\nu \alpha_i \bar{\omega}}{c} \leq \beta^{L_i}$, so that $C_i(0, \beta^C_i) = 0$ is attainable.
\end{proof}
We summarize the constrained linear projection (C-LP) algorithm in algorithmic form. A Python implementation of the algorithm can be found on \href{https://github.com/NFZaugg/C-LP-Lifted-Heston}{Github}.
\begin{algorithm}[H]
\caption{Constrained Linear Projection (C-LP) Algorithm}
\label{alg:constrained}
{
\begin{algorithmic}[1]
\Require $S_0, t \mapsto g_0(t)$ \Comment{Starting values of $S$ and initial variance curve}
\Require $0 = t_0 < t_1 < t_2, \dots, t_M = T : \Delta t= t_{i+1} - t_i$ \Comment{An equidistant time discretization}
\Require $N, \lambda, \nu, \sigma, \rho, r, \bs{x} = (x_1, x_2, \dots, x_N), \bs{\omega} = (\omega_1, \omega_2, \dots, \omega_N)$ \Comment{Model Parameters}
\State Set $\bs{u} = (0, 0, \dots, 0)^T,\quad s = S_0$
\For{$i \gets 0$ to $M-1$}
\State  $G_0(t_i,t_{i+1}) = \int_{t_i}^{t_{i+1}} g_0(t) , dt$ \Comment{Calculate analytically if possible, otherwise numerically.}
    \State
        $\alpha_i = \E_{t_i}[X_{t_i,t_{i+1}}], \quad 
        \beta_i = \frac{1}{\alpha_i}\E_{t_i}[X_{t_i,t_{i+1}}Z_{t_i,t_{i+1}}].$ \hfill \Comment{Derive from \cref{lem:exp,lem:cov}}
    
    \If{$C_i(0,\beta_i) \geq 0$}
    \State $\beta_i^C = \beta_i$\Comment{Use unconstrained algorithm}
    \Else
            \State $
\hat{X}^n_{t_i,t_{i+1}}(0)
        = \E_{t_i}[X^n_{t_i,t_{i+1}}]  
        - \frac{\E_{t_i}[X^n_{t_i,t_{i+1}}Z_{t_i,t_{i+1}}]}{\E_{t_i}[X_{t_i,t_{i+1}}Z_{t_i,t_{i+1}}]} 
          {\E_{t_i}[X_{t_i,t_{i+1}}]},
        \quad n \leq N.
    $
    \State
    $\beta_i^C = \frac{\nu \alpha_i \sum_{n=1}^N \omega_n}{\bs{\omega} u - \bs{\omega} (\text{diag}({\bs{x}}) \hat{\bs{X}}_{t_i,t_{i+1}}(0)  )+ g_0(t_{i+1}) }.$ 
 \Comment{Constrain the beta}

    \EndIf
    \State Sample $\hat{X}_{t_i,t_{i+1}}$ from an inverse Gaussian distribution with 
$ 
        \mu = \alpha_i, 
        \gamma = \left(\frac{\alpha_i}{\beta^C_i}\right)^2.
$
    \State
    $
        \hat{Z}_{t_i,t_{i+1}} = \frac{1}{\beta^C_i}{(\hat{X}_{t_i,t_{i+1}} - \alpha_i)}.
    $ \Comment{ Compute projected diffusion}
        \State
    $\hat{X}^n_{t_i,t_{i+1}} 
        = \E_{t_i}[X^n_{t_i,t_{i+1}}]  
        + \frac{\E_{t_i}[X^n_{t_i,t_{i+1}}Z_{t_i,t_{i+1}}]}{\E_{t_i}[X_{t_i,t_{i+1}}Z_{t_i,t_{i+1}}]} 
         \left(\hat{X}_{t_i,t_{i+1}}-\E_{t_i}[X_{t_i,t_{i+1}}]\right), 
        \quad n \leq N.$
    
    \State $\bs{\hat{X}}_{t_i,t_{i+1}} = (\hat{X}^1_{t_i,t_{i+1}}, \hat{X}^2_{t_i,t_{i+1}}, \dots, \hat{X}^N_{t_i,t_{i+1}})^T$. \Comment{ Unconstrained projection for states}
    \State 
        $\bs{u} \gets \bs{u} - \text{diag}({\bs{x}}) \, \bs{\hat{X}}_{t_i,t_{i+1}} - \lambda \, \hat{X}_{t_i,t_{i+1}} + \sigma \, \hat{Z}_{t_i,t_{i+1}}$.
    \Comment{Update $\bs u$}
   \State 
        $\hat{V}_{t_{i+1}} = \bs{\omega} \, \bs{u} + g_0(t_{i+1})$.
    \Comment{Compute new  $\hat{V}_{t_{i+1}}$}
    \State Finally, update the price process for $s$ with
    \[
        s \gets s \, \e^{r \, \Delta t - \frac{1}{2} \hat{X}_{t_i,t_{i+1}} + \rho \, \hat{Z}_{t_i,t_{i+1}} + \sqrt{(1-\rho^2)\hat{X}_{t_i,t_{i+1}} } \, N_{t_i,t_{i+1}}},
    \]
    {where $N_{t_i,t_{i+1}}$ is a standard normal random variable, $N_{t_i,t_{i+1}} \sim \mathcal{N}(0, 1)$.}
\EndFor
\Return $s$
\end{algorithmic}
}
\end{algorithm}
\subsection{Weak Consistency}
In this section, we analyze the properties of convergence of the numerical scheme. Since the sampled quantity is not a Gaussian noise, showing a strong convergence property is challenging, as the discretization process must be reformulated in a time-continuous It\^o process. As Kloeden and Eckhard~\cite{kloeden2013numerical} point out, an easier property to verify is \emph{weak consistency}, which says that the simulation increments in the limit match the moments of the true increments. 
\begin{lem}[Weak Consistency]
    Let $\hat{V}^{h}_t$ denote a C-LP simulation as described in \Cref{alg:constrained} of the lifted Heston model with step size $h > 0$. Then, for any $t \geq t_0$, the following limits are fulfilled:
    \begin{align}
        &\lim_{h \to 0} \frac{\E_t\left[\hat{V}^{h}_{t+h} - \hat{V}^{h}_t  \right]}{h} = \lim_{h \to 0} \frac{\E_t\left[{V}^{h}_{t+h} - {V}^{h}_t  \right]}{h} = g_0'(t) + \sum_{n=1}^N \omega_n \left(-x_n U_t^n - \lambda V_t\right),\\
        &\lim_{h \to 0} \frac{\var_t\left[\hat{V}^{h}_{t+h} - \hat{V}^{h}_t \right]}{h} = \lim_{h \to 0} \frac{\var_t\left[{V}^{h}_{t+h} - {V}^{h}_t \right]}{h} = \nu^2V_t \sum_{n=1}^N \omega_n^2.
    \end{align}
    \begin{proof}
     Let $\alpha,\beta$ be the mean and, possibly constrained, slope parameter at time $t$ for stepsize $h$. The increment $\Delta^h \hat{V}_t := \hat{V}^{h}_{t+h} - \hat{V}^{h}_t$ is the one step update, and therefore given by
    \[\Delta^h \hat{V}_t = g_0(t+h) - g_0(t) + \sum_{n=1}^N \omega_n \left(-x_n \hat{X}^n_{t,t_h} -\lambda \hat{X}_{t,t+h} + \nu \frac{\hat{X}_{t,t+h} - \alpha}{\beta}\right), \]
    where $\hat{X}_{t,t+h}$ is an inverse Gaussian random variables with $\mu = \alpha, \gamma = \frac{\alpha^2}{\beta^2}$, with $\alpha = \E_{t}[X_{t,t+h}]$.
    Since the expectation of an IG random variable is $\mu$, we see that the diffusion term $ \frac{\hat{X}_{t,t+h} - \alpha}{\beta}$ falls off in expectation. The random variable $\hat{X}^n_{t,t_h}$ is given by 
    \[        \hat{X}^n_{t,{t+h}} 
        = \E_{t}[X^n_{t,{t+h}}]  
        + \frac{ \E_{t}[X^n_{t,{t+h}}Z_{t,{t+h}}]}{\E_{t}[X_{t,{t+h}}Z_{t,{t+h}}]} 
         \left(\hat{X}_{t,{t+h}}-\E_{t}[X_{t,{t+h}}]\right), 
        \quad n \leq N.\]
Again, in expectation, the second term falls off as $\E_{t}\left(\hat{X}_{t,{t+h}}-\E_{t}[X_{t,{t+h}}]\right) = 0$. We conclude that
\[\E_{t}[\Delta^h \hat{V}_t] = \Delta g_0(t) + \sum_{n=1}^N \omega_n \left(-x_n \alpha^n -\lambda \alpha \right),\]
with $\alpha^n := \E_{t}[X^n_{t,{t+h}}]$. Since $X_{t,t_h}$ is the integral of $V_t$, we see that $\lim_{h \to 0} \frac{\alpha}{h} = V_t$, and equivalently, $\lim_{h \to 0} \frac{\alpha^n}{h} = U^n_t$. This proves the first limit. 

For the variance, we note that $\var_{t}(\hat{X}_{t,t+h}) = \frac{\mu^3}{\gamma} = \alpha \beta^2$, and therefore $$\var_{t}(\lambda \hat{X}_{t,t+h}) =  \lambda^2\alpha\beta^2, \quad \var_{t}(\nu \frac{\hat{X}_{t,t+h} - \alpha}{\beta}) = \nu^2\alpha,$$
and similarly,
\[ \var_{t}(\hat{X}^n_{t,t+h} ) = \left(\frac{ \E_{t}[X^n_{t,{t+h}}Z_{t,{t+h}}]}{ \E_{t}[X_{t,{t+h}}Z_{t,{t+h}}]}\right) ^2 \alpha \beta^2.\]
We will now show that $|\beta| \to 0$, as $h \to 0$. Assume first that $\beta$ is not constrained. Using Cauchy-Schwartz, we have
\[|\beta |= \frac{\left|\E_{t}[X_{t,t+h}Z_{t,t+h}]\right|}{\E_{t}[X_{t,t+h}]} \leq \frac{\E_{t}[X^2_{t,t+h}]\E_{t}[Z^2_{t,t+h}]}{\E_{t}[X_{t,t+h}]} = \frac{\E_{t}[X^2_{t,t+h}]\E_{t}[X_{t,t+h}]}{\E_{t}[X_{t,t+h}]},\]
where the last equality follows from the fact that $X_{t,t+h}$ is the quadratic variation of $Z$. Since $X_{t,t+h}$ is an integral with almost surely finite integrant $V_t$, we have $\E_{t}[X^2_{t,t+h}] \to 0$, as $h \to 0$. In the constrained case, we have that
\[\lim_{h\to0} c = \bs{\omega} \bs{U}_{t} - \lim_{h\to0} \bs{\omega} (\text{diag}({\bs{x}}) \hat{\bs{X}}_{t,t_{h}}(0) + \lim_{h\to 0} g_0(t+h) = \bs{\omega} \bs{U}_{t} +  g_0(t) = V_t,\]
and since $\alpha \to 0$, we have $ \lim_{h\to 0}\frac{\nu \alpha \sum_{n=1}^N \omega_n}{c} = 0$, if $V_t > 0$. The constraint algorithm cannot attain $V_t = 0$ due to the constraint, which proves the assertion. 
\end{proof}

\end{lem}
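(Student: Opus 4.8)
\emph{Proof proposal.}\quad The plan is to use that, conditionally on $\F_t$, the one-step increment $\Delta^h\hat V_t := \hat V^h_{t+h}-\hat V^h_t$ produced by \Cref{alg:constrained} is an \emph{affine function of the single inverse Gaussian variable} $\hat X_{t,t+h}$. Writing $\alpha=\E_t[X_{t,t+h}]$ and $\beta=\beta^C$ for the (possibly constrained) slope at time $t$, the update \Cref{eq:update_u}, together with $\hat Z_{t,t+h}=(\hat X_{t,t+h}-\alpha)/\beta$ and $\hat X^n_{t,t+h}=\E_t[X^n_{t,t+h}]+\tfrac{\E_t[X^n_{t,t+h}Z_{t,t+h}]}{\E_t[X_{t,t+h}Z_{t,t+h}]}(\hat X_{t,t+h}-\alpha)$, gives $\Delta^h\hat V_t=a+b\,\hat X_{t,t+h}$ with $\F_t$-measurable $a,b$. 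Since $\hat X_{t,t+h}\sim IG(\alpha,\alpha^2/\beta^2)$ has conditional mean $\alpha$ and conditional variance $\alpha\beta^2$, both $\E_t[\Delta^h\hat V_t]$ and $\var_t[\Delta^h\hat V_t]=b^2\alpha\beta^2$ are closed-form, and the task reduces to a limit as $h\to0$. On the exact-process side, summing \Cref{eq:lifted_heston_end} against $\bs\omega$ and differentiating \Cref{eq:lifted_heston_middle} yields $\d V_t=\bigl(g_0'(t)+\sum_n\omega_n(-x_nU^n_t-\lambda V_t)\bigr)\dt+\nu\bar\omega\sqrt{V_t}\,\dW^2_t$, so both target moments follow from standard It\^o/martingale estimates.

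For the drift, the centering $\E_t[\hat X_{t,t+h}-\alpha]=0$ kills the $\hat Z$ term and the correction part of each $\hat X^n$ in conditional expectation, so $\E_t[\Delta^h\hat V_t]=\bigl(g_0(t+h)-g_0(t)\bigr)+\sum_n\omega_n\bigl(-x_n\E_t[X^n_{t,t+h}]-\lambda\E_t[X_{t,t+h}]\bigr)$. Dividing by $h$ and using $\tfrac1h(g_0(t+h)-g_0(t))\to g_0'(t)$, $\tfrac1h\E_t[X_{t,t+h}]=\tfrac1h\int_t^{t+h}\E_t[V_u]\,\du\to V_t$ and $\tfrac1h\E_t[X^n_{t,t+h}]\to U^n_t$ — each by right-continuity at $u=t$ of $u\mapsto\E_t[V_u]$ resp.\ $\E_t[U^n_u]$, which holds since $V$ and $U^n$ have locally bounded moments — produces the stated drift. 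For the exact process, $\E_t[V_{t+h}-V_t]=\int_t^{t+h}\bigl(g_0'(u)+\sum_n\omega_n(-x_n\E_t[U^n_u]-\lambda\E_t[V_u])\bigr)\du$, because the $\dW^2$ part is a martingale, and the same continuity gives the identical limit.

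For the variance, $\var_t[\Delta^h\hat V_t]=(b\beta)^2\alpha$ with $b\beta=\nu\bar\omega-\lambda\bar\omega\,\beta-\sum_n\omega_n x_n\tfrac{\E_t[X^n_{t,t+h}Z_{t,t+h}]}{\alpha}$, so one needs (i) $\beta\to0$ and (ii) $\E_t[X^n_{t,t+h}Z_{t,t+h}]/\alpha\to0$ as $h\to0$. Recall that the constrained scheme keeps $\hat V_t>0$ strictly, so $\alpha=\E_t[X_{t,t+h}]=\Theta(h)$. For (i): in the unconstrained branch Cauchy--Schwarz gives $|\beta|\le\bigl(\E_t[X^2_{t,t+h}]/\E_t[X_{t,t+h}]\bigr)^{1/2}=O(h^{1/2})$, using $X_{t,t+h}\le h\sup_{u\in[t,t+h]}V_u$ and $\E_t[\sup_{[t,t+h]}V_u^2]<\infty$; in the constrained branch $\beta^C=\nu\alpha\bar\omega/c$ with $\alpha\to0$ and $c\to\hat V_t=V_t>0$. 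For (ii): $|\E_t[X^n_{t,t+h}Z_{t,t+h}]|\le(\E_t[(X^n_{t,t+h})^2]\,\E_t[X_{t,t+h}])^{1/2}=O(h^{3/2})$. Hence $b\beta\to\nu\bar\omega$ and $\tfrac1h\var_t[\Delta^h\hat V_t]=(b\beta)^2(\alpha/h)\to(\nu\bar\omega)^2V_t$. For the exact process, the drift integral contributes only $O(h^2)$ to the variance, whereas $\var_t\bigl(\int_t^{t+h}\nu\bar\omega\sqrt{V_u}\,\dW^2_u\bigr)=\nu^2\bar\omega^2\int_t^{t+h}\E_t[V_u]\,\du\sim\nu^2\bar\omega^2V_th$, giving the same limit.

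The main obstacle is step (i): controlling the (constrained) slope $\beta^C$ uniformly as $h\to0$, since the increment variance $\alpha\beta^2$ collapses at the correct $O(h)$ rate only if $\beta\to0$. Because $\beta^C$ toggles between the projection value and the boundary value $\nu\alpha\bar\omega/c$, both cases must be handled separately; the boundary case requires $c$ (which tends to $\hat V_t$) to stay strictly positive, which is precisely what the positivity constraint of \Cref{alg:constrained} guarantees. The remaining care is routine but should be made explicit: justifying $\tfrac1h\E_t[X_{t,t+h}]\to V_t$ and the $O(h^2)$, $O(h^{3/2})$ bounds via local boundedness of $\E_t[\sup_{u\in[t,t+h]}V_u^p]$ and $\E_t[\sup_{u\in[t,t+h]}|U^n_u|^p]$ for the square-root-type dynamics, and fixing the meaning of the exact-process symbol $V^h$ sampled on the grid.
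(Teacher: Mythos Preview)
Your proposal follows essentially the same approach as the paper's proof: write the increment as an affine function of the single IG variable $\hat X_{t,t+h}$, use centering $\E_t[\hat X_{t,t+h}]=\alpha$ to get the drift limit via $\alpha/h\to V_t$ and $\alpha^n/h\to U^n_t$, and for the variance use $\var_t(\hat X_{t,t+h})=\alpha\beta^2$ together with $\beta\to0$ (Cauchy--Schwarz in the unconstrained branch, $\beta^C=\nu\alpha\bar\omega/c$ with $c\to V_t>0$ in the constrained branch). Your write-up is in fact tighter than the paper's in two places --- you combine the variance contributions via the single coefficient $(b\beta)^2\alpha$ rather than listing individual variances, and you also supply the exact-process side of the limits, which the paper's proof omits --- but the underlying argument is the same.
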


\section{Numerical Results}
\label{sec:num}
In this section, we run several numerical experiments to assess the properties of the novel method derived in the previous section. We discuss the estimation of the constrained slope parameter $(\beta^C_i)$ in \Cref{sec:slope_estimation}, and then analyze the convergence of the method in experiments in \Cref{sec:convergenceE,sec:convergenceV,sec:Abi} to various benchmarks. \Cref{sec:convergenceE} analyzes the sampling error compared to the Euler method, to show that the C-LP method indeed converges. In \Cref{sec:convergenceV} we show how the method can be used to price volatility derivatives. \Cref{sec:Abi} then highlights the difference in convergences between the C-LP method and an alternative IVI method. 

\subsection{Slope Estimation}
\label{sec:slope_estimation}
The main ingredient of the C-LP method is the linear approximation of the integrated variance, given by the results of \Cref{lem:exp} and \Cref{lem:cov}. We visualize the approximation for a set of parameters by comparing the linear equation to benchmark values created with the EM method with a high number of steps. \Cref{fig:reg} show how values for $\alpha = \E_{s}[X_{s,t}]$ and $\beta= \frac{\E_{s}[X_{s,t}Z_{s,t}]}{\E_{s}[X_{s,t}]}$ provide the linear projection between the two stochastic variables for two sets of parameters.
\begin{figure}[H]
    \centering
    \includegraphics[width=\linewidth]{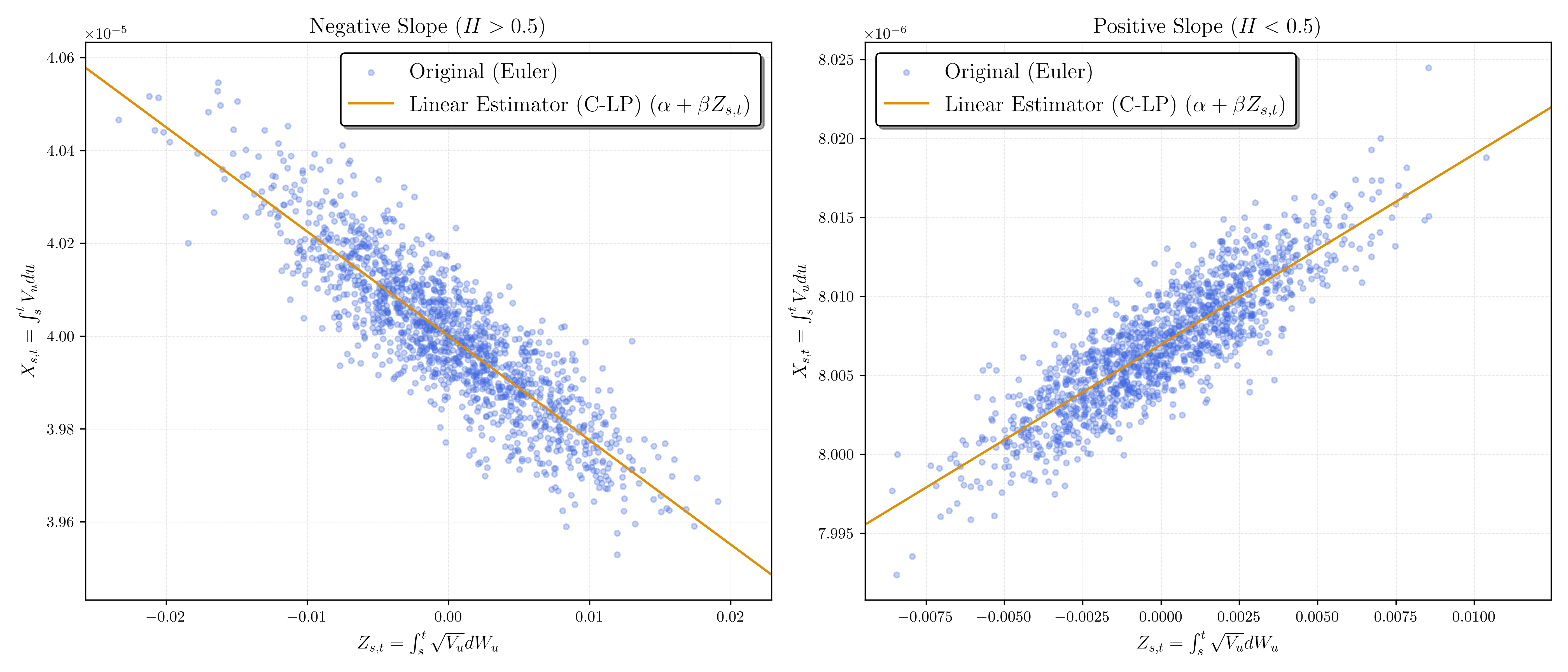}
    \caption{The mean and covariance as computed above offer the best linear approximation between $Z_{s,t}$ and $X_{s,t}$. Note that the start value $V_s$, which we condition on, is deterministic, such that $\alpha,\beta$ are fixed. }
    \label{fig:reg}
\end{figure}
The error of the approximation is the residual term $e_{s,t} := X_{s,t} - (\alpha + \beta Z_{s,t})$. We can derive its variance by calculating the expectation:
\begin{align*}
    \E_{s}[e_{s,t}^2] &= \E_s\left[\left(X_{s,t} - (\alpha + \beta Z_{s,t}\right))^2\right]\\
    &= \E_s\left[X_{s,t}^2 + \alpha^2 + (\beta Z_{s,t})^2 + 2\alpha\beta Z_{s,t} - 2 \alpha X_{s,t} - 2\beta X_{s,t} Z_{s,t}\right]\\
    &= \E_s\left[X_{s,t}^2\right] + \alpha^2 + \beta^2\E_s\left[ X_{s,t}\right] - 2 \alpha \E_s\left[X_{s,t}\right] - 2\beta \E_s\left[X_{s,t} Z_{s,t}\right]\\
    &= \E_s\left[X_{s,t}^2\right] + \alpha^2 + \alpha\beta^2 - 2 \alpha^2 - \alpha 2\beta^2 \\
    &= \E_s\left[X_{s,t}^2\right] - (\alpha\beta^2 + \alpha^2).
\end{align*}
The terms $\alpha\beta^2 + \alpha^2$ are exactly equal to the second moment of the IG random variable $\E_{s}[\hat{X}_{s,t}^2] = \alpha\beta^2 + \alpha^2$. The remaining expectation $\E_s\left[X_{s,t}^2\right]$ is not analytically available. Nevertheless, the calculation shows that the estimation error depends on how well the second moment of the IG random variable approximates the true variance. Since the exact implication of each parameter on this error is unknown, we numerically estimate the impact by parameter variation for a given set of parameters. We consider a fixed set of parameters $\Theta  = (\lambda, \sigma, v_0 , \theta ,\rho , H , N)$ and numerically compute the variation
\begin{equation}
\frac{\partial  \E_{s}[e_{s,t}^2] }{\partial \Theta_k} \approx \frac{E_{\Theta +  \Delta_k} - E_{\Theta}}{ \Delta_k}, \quad k\leq K.
\end{equation}
where $E_{\Theta} = \E_{s}[e_{s,t}^2]$ with parameter set $\Theta$ and $\Theta+ \Delta_k$ represents a variation $\Delta_k$ in the $k$-th parameter. We choose $ \Delta_k = 0.001$ for all parameters except for the discrete parameter $N$, where we choose $ \Delta_N = 1$. \Cref{tbl:variation} shows the computation at $\Theta = (0.25 , 0.1 , 0.02  ,0.5  ,0.7 ,  0.3,  5)$, both in absolute and relative sensitivity. The absolute level of $V_t$ has a positive impact on $\E_{s}[e_{s,t}^2]$ through the positive coefficient of $v_0$ and $\theta$. Furthermore, the vol-of-vol $(\nu)$ parameter has a positive coefficient as well.
\begin{table}[H]
    \centering
    \begin{tabular}{c r r r r r r}
    \hline
         & $\lambda$ & $\nu$ & $v_0$ & $\theta$ & $h$ & $N$ \\ \hline
        $\frac{E_{\Theta_0+  \Delta_k} - E_{\Theta_0}}{ \Delta^k}$ 
            & 0.02891 & 0.01100 & 0.0164 & 0.0402 & 0.0098 & -0.0003\\
        $\frac{E_{\Theta+  \Delta_k} - E_{\Theta}}{ \Delta_k} \cdot \Theta_k$ 
            & 11.52\% & 10.89\% & 78.07\% & 8.03\% & 3.25\% & -0.01\% \\ \hline
    \end{tabular}
  
    \caption{Sensitivities of $\E_{s}[e_{s,t}^2]$ w.r.t. parameters, excl. $\rho$ as it has no impact on $X_{s,t}$.}
      \label{tbl:variation}
\end{table}
The slope of the linear approximation is possibly constrained due to the positivity condition of the variance. Here, we visualize the impact of the constrained slope. \Cref{fig:slope_infeas} shows an example of an unconstrained linear projection, which can lead to samples $\hat{V}_{t_{i+1}}$ that are negative, and hence infeasible.
\begin{figure}[H]
    \centering
    \includegraphics[width=\linewidth]{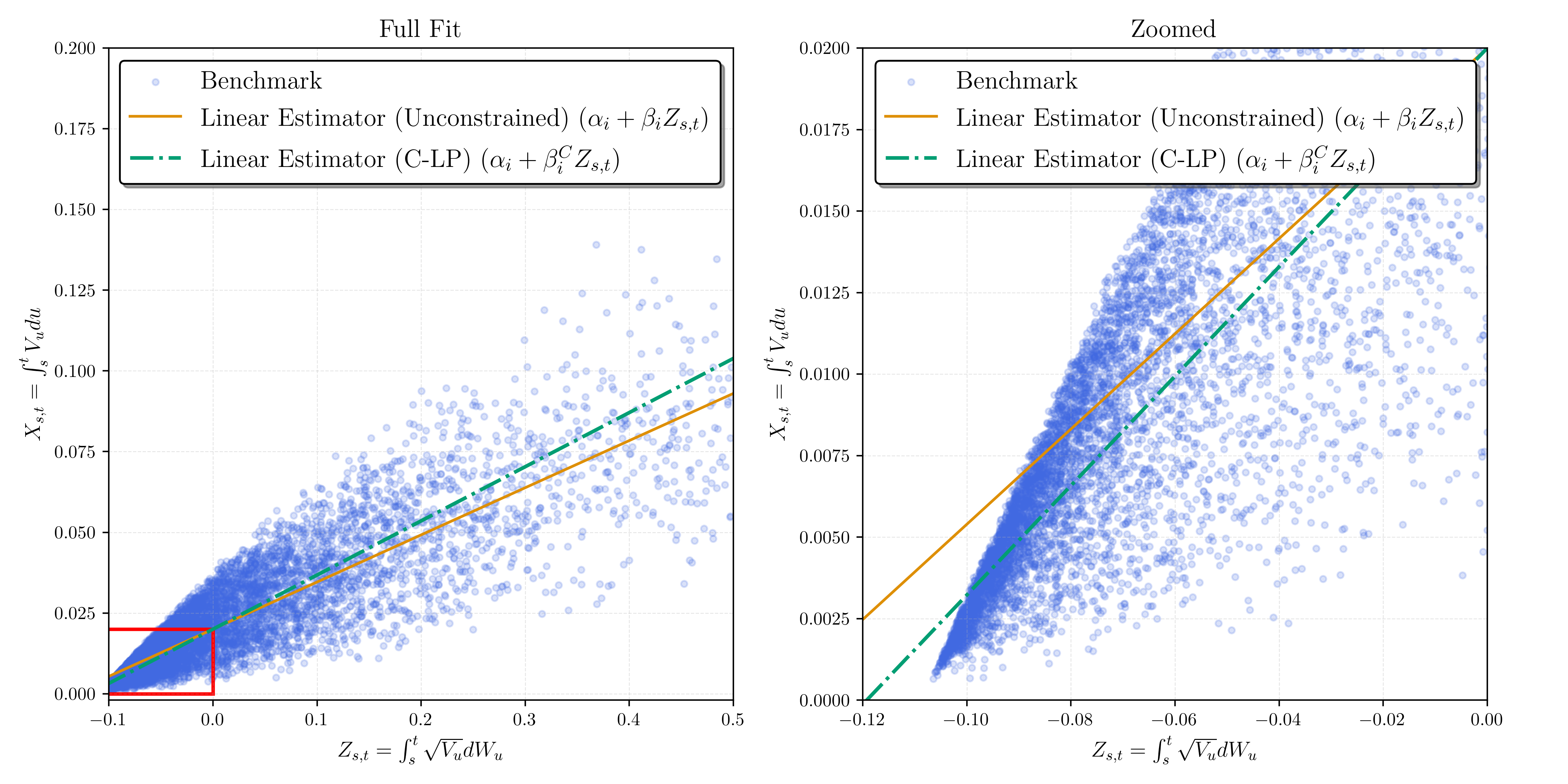}
    \includegraphics[width=0.7\linewidth]{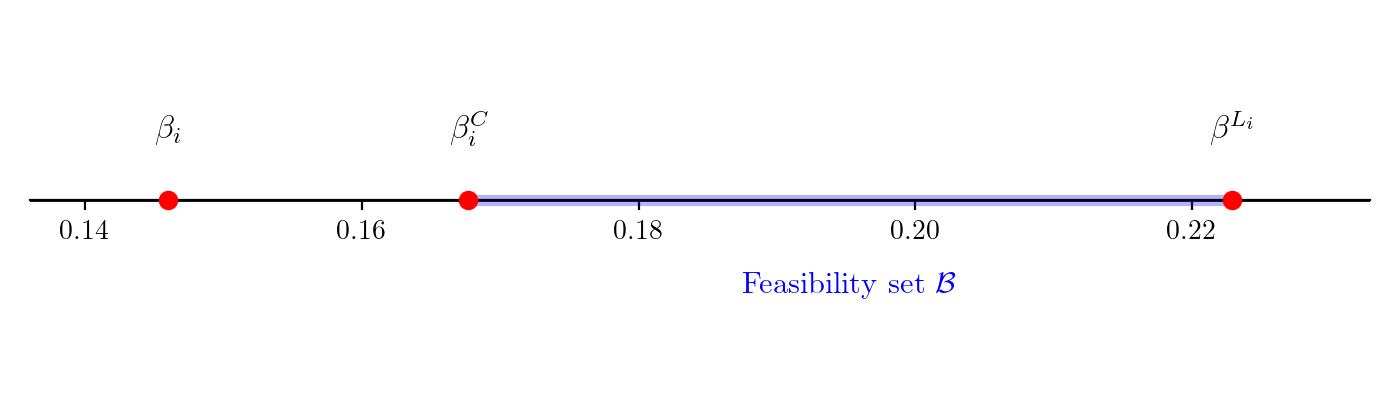}
    \caption{For the extreme parameter set $\nu=0.3, v_0 = 0.02, \theta = 0.02$, the unconstrained beta can cause sample combinations of $\hat{X}_{t_i,t_{i+1}}, \hat{Z}_{t_i,t_{i+1}}$ leading to negative variances. The unconstrained beta is therefore not in the feasibility set, which lies between $\beta_i^C$ and the upper bound $\beta^{L_i}$. }
    \label{fig:slope_infeas}
\end{figure}

\subsection{Convergence of Integrated Variance }
\label{sec:convergenceE}
To highlight the effectiveness of the novel algorithm, we analyze the speed of convergence of the algorithm. We run the algorithm for a varying number of time steps, starting from very large time steps, and compare the statistical properties of the resulting samples to a benchmark set. The benchmark set we obtain from an Euler discretization with very small time steps ($1\,000$ steps), which we consider sufficiently small to be close to the actual distribution. We assess the convergence of the statistical properties of the final quantity of interest $X_T := \int_{t_0}^T V_u \du$ at a final time $T > t_0$. We consider the empirical quantities for $\E[X_T ]$ and $\var(X_T)$ for a varying number of time steps for a final time $T=5$ between the EM scheme and the C-LP scheme.  We run the experiment on a diverse set of parameters, such as varying numbers of $N$ and the Hurst parameter $H$. Set 3 is an example of a realistic parameter set according to~\cite{abi2019lifting}. In \Cref{tab:convergence-params}, we display the selected parameter sets for the experiment
\begin{table}[H]
\centering
\begin{tabular}{cccccccc}
\toprule
Name&$\lambda$ & $\nu$ & $v_0$ & $\theta$ & $\rho$ & $H$ & $N$ \\
\midrule
Set 1&0.25 & 0.1 & 0.02 & 0.5 & 0.7  & 0.3 & 5  \\
Set 2&0.1 & 0.2 & 0.1 & 0.7 & -0.7 & 0.1 & 10 \\
Set 3 \cite{abi2019lifting} &0 & 0.31 & 0.1  & 0.02 & 0.7  & 0.3 & 20 \\
\bottomrule
\end{tabular}
\caption{Lifted Heston parameter sets used in numerical experiments.}
\label{tab:convergence-params}
\end{table}
In \Cref{fig:x_prop_clp_euler}, we plot the absolute differences of the final quantities to the benchmark. The C-LP scheme is able to reach a small absolute error, close to the standard error, with large step sizes. Depending on the parameter set, the EM scheme only converges when step sizes are below $\Delta t = 0.2$, or even smaller. For the last set, a $\Delta t = 0.025$ is not sufficient to provide meaningful results. On the other hand, the C-LP provides excellent results for step sizes of $\Delta t = 2.15$. 
\begin{figure}[H]
    \centering
    \includegraphics[width=1\linewidth]{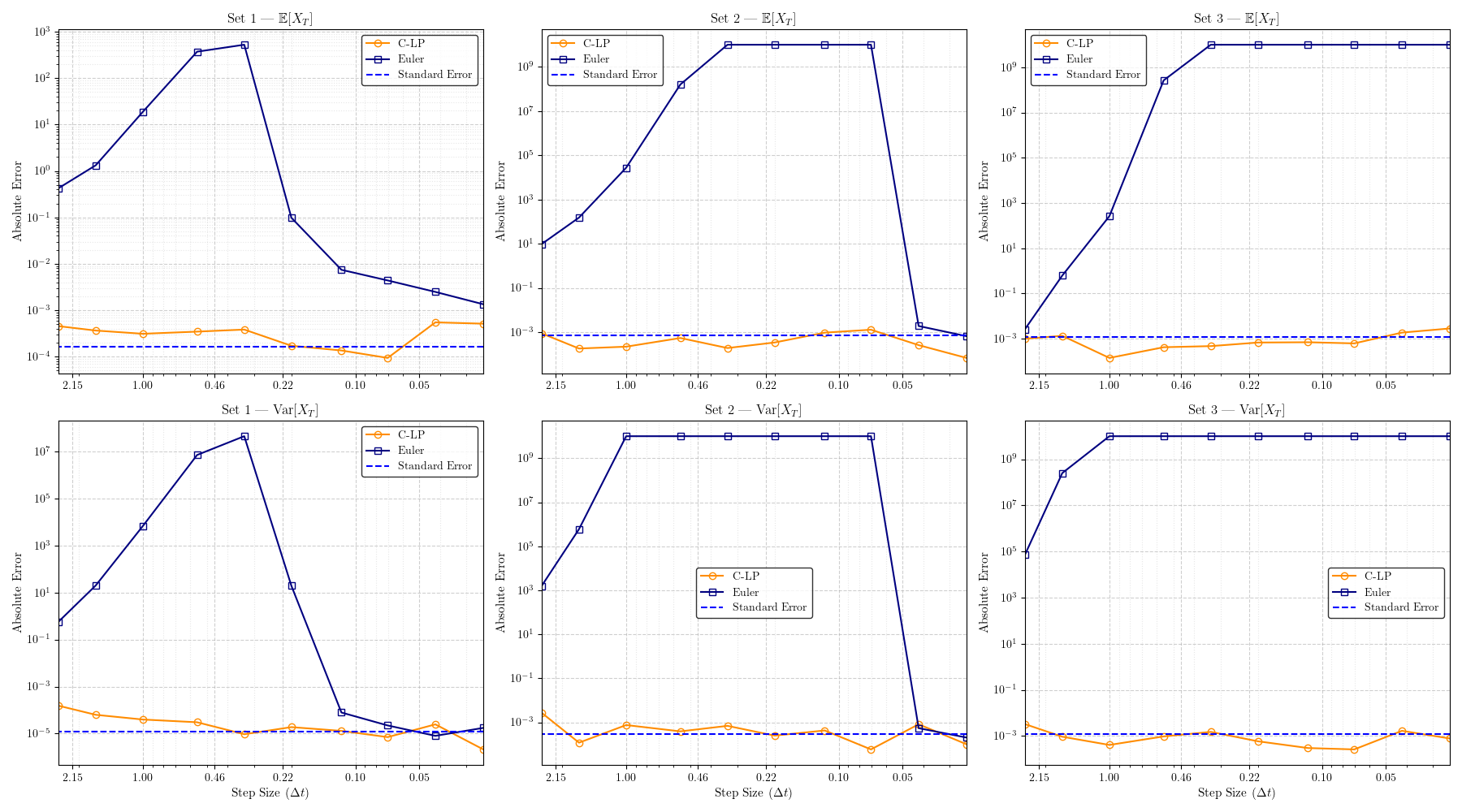}
    \caption{EM scheme vs C-LP: $\E[X_T]$ and $\var(X_T)$, $T = 5$. The errors are capped at $10^3$.}
    \label{fig:x_prop_clp_euler}
\end{figure}

\subsection{Pricing Volatility Options}
\label{sec:convergenceV}
In the second experiment, we test a practical application of volatility models with a rough component. The pricing volatility of derivatives, such as options on the VIX index, depends directly on the distributive properties of the integrated variance process. We analyze the pricing accuracy of the novel simulation scheme to price such derivatives. We define the VIX index as a random process $VIX_t, t \geq t_0$, defined as the quantity
\begin{equation}
    \label{eq:vix}
    VIX_T := \sqrt{\frac{1}{\Theta}\int_{T}^{T+ \Theta} \E[V_u|\F_T] \du } = \sqrt{\frac{1}{\Theta}{\E[X_{T, T +\Theta }|\F_T]}},
\end{equation}
where $\Theta$ is a predetermined time horizon of 1 month. Note that $VIX_t$ is not $\F_t$ but $\F_{t +\Theta}$-adapted.

We conduct a numerical experiment to show the speed of convergence of the C-LP scheme in pricing $VIX$ derivatives. Using the parameter sets from \Cref{tab:convergence-params}, we use the novel numerical scheme to price European VIX options with time-to-maturity $T=1$ and show how the implied volatility of the model price compares to a model price obtained through an Euler simulation with a comparable number of steps. The prices are estimated from a Monte-Carlo simulation, where we simulate the variance process $V_t$ until time $T+ \Theta$. To ensure that we can easily estimate the $VIX_T$, we choose the number of time steps such that $T=1$ is on the time grid, i.e., such that $T \mod \Delta t = 0$. Since $\Theta$ is one month, we need to simulate the horizon $t_0=0$ to $T+\Theta = 1 + \frac{1}{12}$. For that reason, we choose time steps as multiples of $13$. For the first two parameter sets, we show the time steps $13,26$ and $39$. For the third set, we include $72$, since the Euler scheme does not produce usable results for small time steps due to the numerical instability.
\begin{table}[H]
\centering
        \begin{tabular}{c|c}
Number of Steps & $\Delta t$   \\
\hline
13              & 1/12    \\
26              & 1/24      \\
39              & 1/36  \\
78              & 1/74  
\end{tabular}

\caption{Equidistant grid: Number of steps and step sizes for $T=1 + \frac{1}{12}$}
\label{tbl:dt}
\end{table}
\Cref{tbl:dt} shows an overview of number of steps and $\Delta t$ for the experiment.
\begin{figure}[H]
    \centering
    \includegraphics[width=\linewidth]{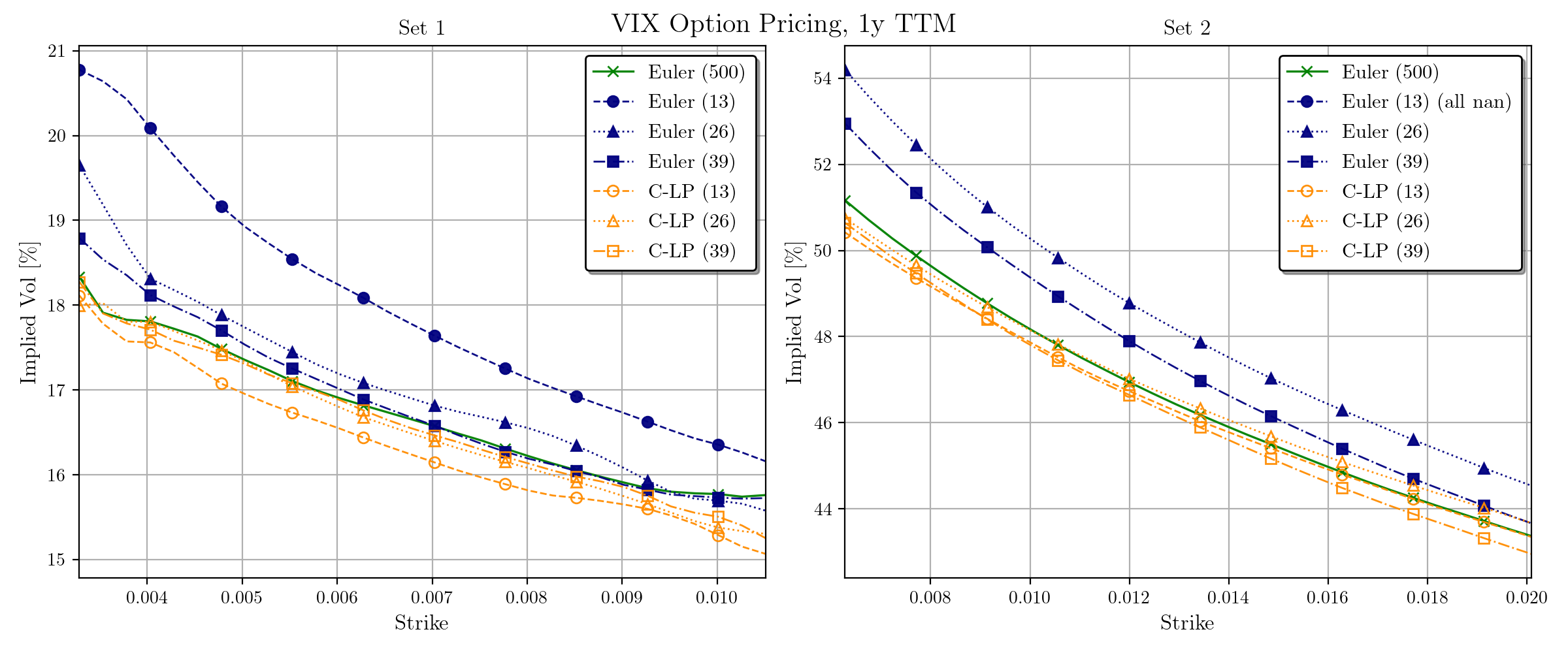}
        \includegraphics[width=0.5\linewidth]{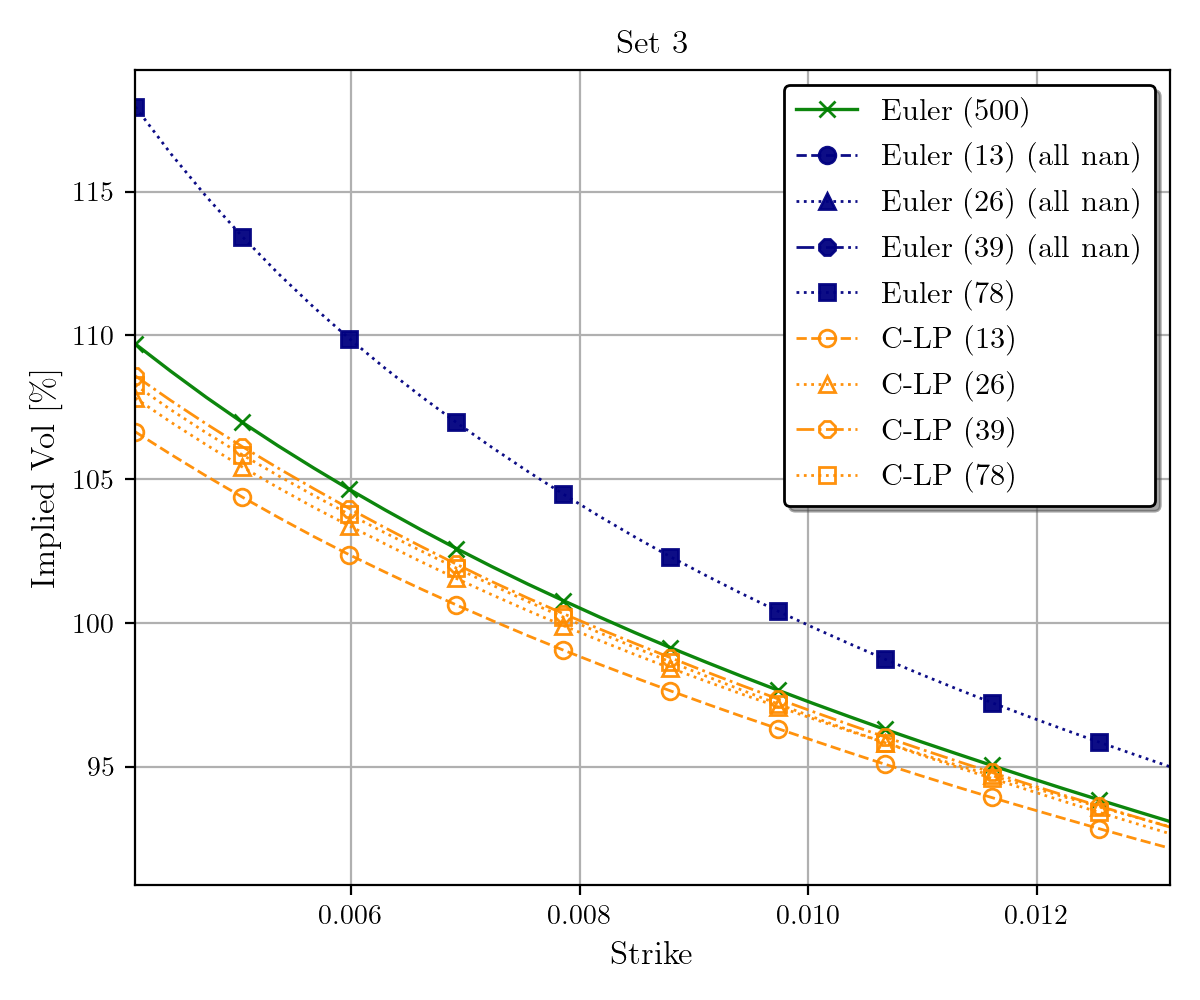}

    \caption{Pricing of European VIX options - C-LP vs Euler Scheme}
    \label{fig:vix_options}
\end{figure}
\Cref{fig:vix_options} shows the results of the experiment. As expected from the convergence of the moments of the integrated variance, when pricing the $VIX$, the C-LP scheme produces reasonable implied volatilities for much smaller time steps compared to the Euler scheme. Furthermore, due to the constrained formulation, the scheme is valid for any number of time steps.
\subsection{Alternative Integrated Variance Schemes}
\label{sec:Abi}
Abi Jaber and Attal~\cite{jaber2025simulating} derived a general method for simulating rough Heston-type models, which can be applied to the lifted Heston model using a specific kernel choice to obtain a similar scheme. In this section, we briefly compare the numerical properties of the two schemes and highlight their differences. While the general framework of Abi and Attal offers flexibility in modeling (possible for other kernels than the rough Heston model), and also significantly improves convergence for larger steps compared to the Euler scheme, the formulation of the C-LP as the \emph{best} possible approximation suggests a faster convergence and a possibility for even larger time steps. 

While both methods rely on sampling of the integrated variance through an inverse Gaussian distribution, the main difference lies in how to obtain the linear approximation. In the Abi/Attal scheme, the linear relation is derived from the dynamics of the process. The authors use a right-point approximation of certain integrals that integrate over the entire time step. This implies that the approximation error is proportional to the time step size. In the C-LP scheme, we avoid such an approximation and obtain the relation as a least-square optimization, meaning that the approximation error only depends on the inherent non-linearity between the two random variables. \Cref{fig:slope_abi_clp} displays the impact of the right-point approximation in examples of computed linear estimators versus a true distribution obtained from the EM scheme.
\begin{figure}[H]
    \centering
    \includegraphics[width=\linewidth]{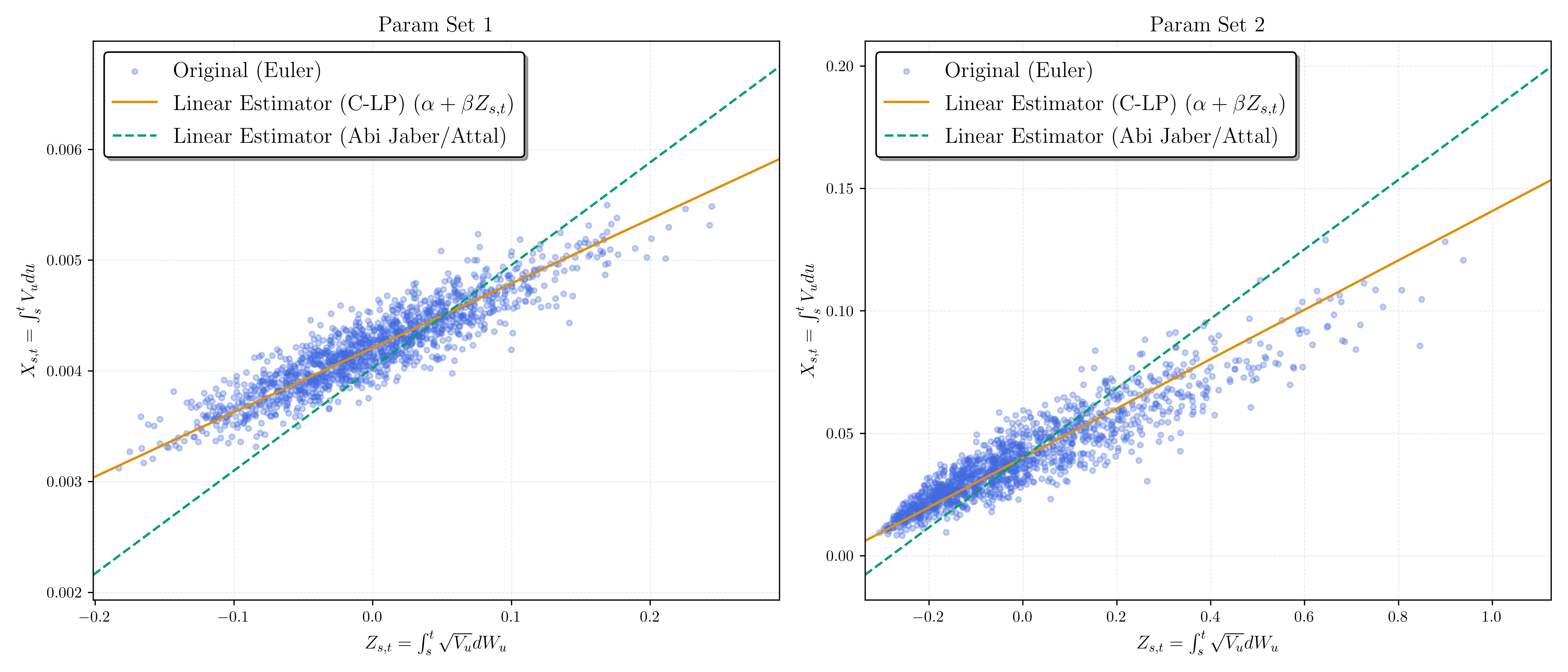}
    \caption{Linear Estimations between C-LP and Abi/Attal method.}
    \label{fig:slope_abi_clp}
\end{figure}

To highlight the impact of choosing a suboptimal linear equation, we conduct a similar study to examine the convergence of the integrated variance. \Cref{fig:x_prop_clp_abi} shows the results of the simulation for the C-LP scheme, where we plot the absolute error of the simulation to the benchmark (small step EM simulation), and display the standard error of the expectation and variance given the $200\,000$ samples. We notice that the C-LP scheme provides small errors for extremely large time steps, such as $\Delta t = 5$. Since the method is moment matching, the expectation matches the benchmark up to a standard error. The variance of $X_{T}$ converges fast to the benchmark, with significantly reduced error compared to the Abi/Attal method. Furthermore, due to the constrained formulation of the C-LP, the method yields robust results for set 3, which other methods fail to achieve due to paths with negative variances.
\begin{figure}[H]
    \centering
    \includegraphics[width=1\linewidth]{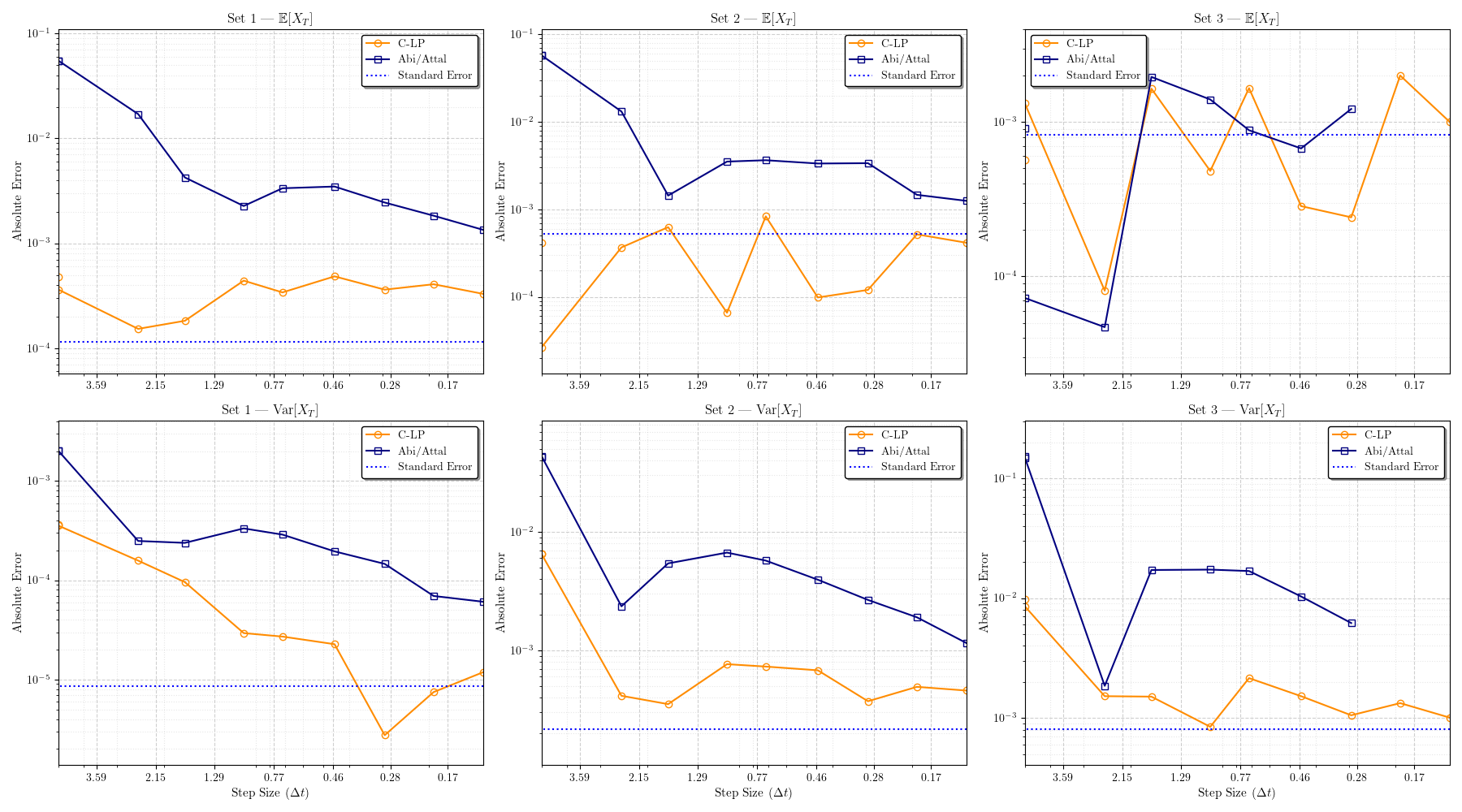}
    \caption{Abi/Attal vs C-LP: $\E[X_T]$ and $\var(X_T)$, $T = 5$.}
    \label{fig:x_prop_clp_abi}
\end{figure}

\section{Conclusions}
The lifted Heston model is a stochastic volatility model, where the volatility is driven by a set of state processes with a common driver. Although Markovian, the multidimensional nature of the process is challenging, as traditional methods require a small step approximation of the multidimensional process. Incorporating the insights from the well-studied simulation of the regular Heston model, we develop a novel scheme capable of taking larger time steps. The scheme establishes a linear relation between the integrated variance process $\int_s^t V_u \du$ and the stochastic driver $\int_s^t \sqrt{V_u}\dW_u$ using an $L^2$ projection. To ensure that the scheme remains well-defined, this projection is then constrained to a feasible parameter space, and is hence called the \emph{constrained linear projection method} (C-LP). This method provides the best possible scheme in the class of integrated variance implicit schemes, as it optimizes the linear approximation, which is the only source of approximation.

In the numerical experiments in \Cref{sec:num}, we then show that the scheme is capable of simulating the lifted Heston model for much larger steps than the Euler scheme, providing robust simulations for any valid parameter set. Additionally, we demonstrate that this method outperforms similar schemes, such as the Abi/Attal method, as it employs a suboptimal linear approximation of the integrated variance and stochastic driver. 
\bibliography{bib.bib}

\begin{thebibliography}{10}

\bibitem{abi2019lifting}
{\sc E.~Abi~Jaber}, {\em Lifting the {H}eston model}, Quantitative finance, 19 (2019), pp.~1995--2013.

\bibitem{jaber2024simulation}
\leavevmode\vrule height 2pt depth -1.6pt width 23pt, {\em Simulation of square-root processes made simple: applications to the {H}eston model}, arXiv preprint arXiv:2412.11264,  (2024).

\bibitem{jaber2025simulating}
{\sc E.~Abi~Jaber and E.~Attal}, {\em Simulating integrated {V}olterra square-root processes and {V}olterra {H}eston models via inverse gaussian}, arXiv preprint arXiv:2504.19885,  (2025).

\bibitem{alfonsi2005discretization}
{\sc A.~Alfonsi}, {\em On the discretization schemes for the {CIR} (and {B}essel squared) processes}, Monte Carlo Methods Appl., 11 (2005), pp.~355--384.

\bibitem{andersen2008simple}
{\sc L.~Andersen}, {\em Simple and efficient simulation of the {H}eston stochastic volatility model}, Journal of Computational Finance, 11 (2008), pp.~1--43.

\bibitem{bayer2023markovian}
{\sc C.~Bayer and S.~Breneis}, {\em Markovian approximations of stochastic {V}olterra equations with the fractional kernel}, Quantitative Finance, 23 (2023), pp.~53--70.

\bibitem{bayer2016pricing}
{\sc C.~Bayer, P.~Friz, and J.~Gatheral}, {\em Pricing under rough volatility}, Quantitative Finance, 16 (2016), pp.~887--904.

\bibitem{bayer2023rough}
{\sc C.~Bayer, P.~K. Friz, M.~Fukasawa, J.~Gatheral, A.~Jacquier, and M.~Rosenbaum}, {\em Rough volatility}, SIAM, 2023.

\bibitem{bergomi2015stochastic}
{\sc L.~Bergomi}, {\em Stochastic volatility modeling}, CRC press, 2015.

\bibitem{broadie2006exact}
{\sc M.~Broadie and {\"O}.~Kaya}, {\em Exact simulation of stochastic volatility and other affine jump diffusion processes}, Operations research, 54 (2006), pp.~217--231.

\bibitem{deelstra1998convergence}
{\sc G.~Deelstra and F.~Delbaen}, {\em Convergence of discretized stochastic (interest rate) processes with stochastic drift term}, Applied stochastic models and data analysis, 14 (1998), pp.~77--84.

\bibitem{dubins1965continuous}
{\sc L.~E. Dubins and G.~Schwarz}, {\em On continuous martingales}, Proceedings of the National Academy of Sciences, 53 (1965), pp.~913--916.

\bibitem{el2019characteristic}
{\sc O.~El~Euch and M.~Rosenbaum}, {\em The characteristic function of rough {H}eston models}, Mathematical Finance, 29 (2019), pp.~3--38.

\bibitem{folks1978inverse}
{\sc J.~L. Folks and R.~S. Chhikara}, {\em The inverse {G}aussian distribution and its statistical application - a review}, Journal of the Royal Statistical Society Series B: Statistical Methodology, 40 (1978), pp.~263--275.

\bibitem{grzelak2019stochastic}
{\sc L.~A. Grzelak, J.~A. Witteveen, M.~Suarez-Taboada, and C.~W. Oosterlee}, {\em The stochastic collocation {M}onte {C}arlo sampler: highly efficient sampling from `expensive' distributions}, Quantitative Finance, 19 (2019), pp.~339--356.

\bibitem{heston1993closed}
{\sc S.~L. Heston}, {\em A closed-form solution for options with stochastic volatility with applications to bond and currency options}, The review of financial studies, 6 (1993), pp.~327--343.

\bibitem{kloeden2013numerical}
{\sc P.~Kloeden and E.~Platen}, {\em Numerical Solution of Stochastic Differential Equations}, Stochastic Modelling and Applied Probability, Springer Berlin Heidelberg, 2013.

\bibitem{lord2010comparison}
{\sc R.~Lord, R.~Koekkoek, and D.~V. Dijk}, {\em A comparison of biased simulation schemes for stochastic volatility models}, Quantitative Finance, 10 (2010), pp.~177--194.

\bibitem{ma2022fast}
{\sc J.~Ma and H.~Wu}, {\em A fast algorithm for simulation of rough volatility models}, Quantitative Finance, 22 (2022), pp.~447--462.

\bibitem{Oosterlee_Grzelak_2020}
{\sc C.~W. Oosterlee and L.~A. Grzelak}, {\em Mathematical Modeling and Computation in Finance: With exercises and Python and MATLAB computer codes}, World Scientific Publishing Europe Ltd, 2019.

\bibitem{richard2023discrete}
{\sc A.~Richard, X.~Tan, and F.~Yang}, {\em On the discrete-time simulation of the rough {H}eston model}, SIAM Journal on Financial Mathematics, 14 (2023), pp.~223--249.

\end{thebibliography}
\appendix
\section{Proof of \Cref{eq:expecation}}
\label{app:proof_mean}
Unlike the regular Heston model, the unconditional expectation of the variance process is not straightforward in the lifted Heston Model. While the value depends on the exact implementation of $g_0(t)$, we can derive an integral equation to obtain a general form of the solution. To obtain it, we first remove the $U_t^n$ term in (\ref{eq:lifted_heston_end}) using It\^o's lemma. We obtain for any $n \leq N$
\begin{align*}
    \d \e^{x_n t}U_t^n &= x_n\e^{x_n t} U_t^n + \e^{x_n t} \d U_t^n \\
    &=  -x_n\e^{x_n t} U_t^n \dt + x_n\e^{x_n t} U_t^n \dt  - \e^{x_n t} \lambda V_t \dt + \e^{x_n t} \nu \sqrt{V_t}\dW_t^2\\
    &= - \e^{x_n t}\lambda V_t \dt + \e^{x_n t}\nu \sqrt{V_t}\dW_t^2,
\end{align*}
and therefore for any $T \geq t_0$,
\begin{equation*}
    U_T^n = \int_{t_0}^T - \lambda \e^{-x_n (T-t)}V_t \dt + \int_{t_0}^T \e^{-x_n (T-t)} \sqrt{V_t}\dW_t^2.
\end{equation*}
Taking the expectation in \Cref{eq:lifted_heston_middle}, we obtain the Volterra equation
\begin{align*}
    \E_{t_i}[V_T] &= g_0(T) + \bs{\omega} \E_{t_i}[\bs U_T]\\
    &=  g_0(T) - \lambda\sum_{n=1}^N \omega_n \int_{t_0}^T \e^{-x_n (T-t)} \E[V_t] \dt .
\end{align*}
Plugging in the choice of $g_0$ as in \Cref{eq:fwd_curve_simple} we obtain
\begin{align*}
    \E[V_T] &=  V_0 + \lambda \theta \sum_{n=1}^N \frac{\omega_n}{x_n} \left[1 - \e^{-x_n (T-t_0)}\right]  - \lambda\sum_{n=1}^N \omega_n \int_{t_0}^T \e^{-x_n (T-t)} \E[V_t] \dt \\
    &=V_0 + \sum_{n=1}^N \omega_n \int_{t_0}^T \lambda(\theta - \E[V_t]) \e^{-x_n (T-t)}\dt,
\end{align*}
which shows the importance of $\theta$ in controlling the long-term mean of $V_t$.
\end{document}